\definecolor{amber}{rgb}{0, 0.50, 1.0}
\newcommand{\novo}[1]{#1}
\newcommand{\@chapapp}{\relax}%
\newcommand*{\textoverbar}[1]{$\bar{\hbox{#1}}\m@th$}
\renewcommand{\iff}{\leftrightarrow}
\newcommand{\comp}{\cdot}
\newcommand{\signedrho}{\bar{\rho}}
\newcommand\blfootnote[1]{%
  \begingroup
  \renewcommand\thefootnote{}\footnote{#1}%
  \addtocounter{footnote}{-1}%
  \endgroup
}
\begin{document}

\title{On the Complexity of Some Variations of Sorting by Transpositions\blfootnote{Journal of Universal Computer Science 26(9), https://doi.org/10.3897/jucs.2020.057}}

\author{
   {\bfseries Alexsandro~Oliveira~Alexandrino}\\
    (Institute of Computing, University of Campinas\\
    Campinas, Brazil \\
    alexsandro@ic.unicamp.br)
    \and
    {\bfseries Andre~Rodrigues~Oliveira}\\
     (Institute of Computing, University of Campinas\\
     Campinas, Brazil \\
     andrero@ic.unicamp.br)
     \and
   {\bfseries Ulisses~Dias}\\
    (School of Technology, University of Campinas \\ Limeira, S{\~a}o Paulo, Brazil \\
    ulisses@ft.unicamp.br)
    \and
   {\bfseries Zanoni~Dias}\\
      (Institute of Computing, University of Campinas\\
      Campinas, Brazil \\
      zanoni@ic.unicamp.br)
   \\
}

\maketitle

% REQUIRED
\begin{abstract}
  One of the main challenges in Computational Biology is to find the evolutionary distance between two organisms. In the field of comparative genomics, one way to estimate such distance is to find a minimum cost sequence of rearrangements (large scale mutations) needed to transform one genome into another, which is called the rearrangement distance. In the past decades, these problems were studied considering many types of rearrangements (such as reversals, transpositions, transreversals, and revrevs) and considering the same weight for all rearrangements, or different weights depending on the types of rearrangements. The complexity of the problems involving reversals, transpositions, and both rearrangements is known, even though the hardness proof for the problem combining reversals and transpositions was recently given. In this paper, we enhance the knowledge for these problems by proving that models involving transpositions alongside reversals, transreversals, and revrevs are NP-hard, considering weights $w_1$ for reversals and $w_2$ for the other rearrangements such that $w_2/w_1 \leq 1.5$. In addition, we address a cost function related to the number of fragmentations caused by a rearrangement, proving that the problem of finding a minimum cost sorting sequence, considering the fragmentation cost function with some restrictions, is NP-hard for transpositions and the combination of reversals and transpositions.
\end{abstract}

% REQUIRED
\begin{keywords}
  Genome Rearrangements, Weighted Rearrangements, Transpositions
\end{keywords}

% REQUIRED
\begin{category}
F.2.0, G.2.1.
\end{category}

% \linenumbers

\section{Introduction}

A \emph{Genome Rearrangement} is a large scale mutation which changes the position and the orientation of conserved regions in a genome. In the field of comparative genomics, one common approach to estimate the evolutionary distance is to formulate it as the minimum number of rearrangements that transforms a genome into another, which is called \emph{rearrangement distance}.

In comparative genomics, a genome is represented as an ordered sequence of conserved blocks (high similarity regions) and, depending on the genomic information available, different mathematical models can be used. Considering that a genome has no repeated conserved blocks, we can model the genome as a permutation, where each element represents a conserved block. If the orientation of the genes is known, we use signed permutations to indicate the orientation of the elements. If the orientation is unknown, we use unsigned permutations. When using this representation, the problem of finding the rearrangement distance between two genomes is equivalent to the problem of finding the sorting by rearrangements distance of a permutation~\cite{1995-kececioglu-sankoff}, which is the minimum number of operations needed to transform this permutation into a permutation where each element is positive (or unsigned) and in ascending order.

A \emph{rearrangement model} defines the set of rearrangements allowed to compute the distance.
Two of the most studied rearrangements are reversals, which inverts a segment of the genome, and transpositions, which swaps the position of two adjacent segments of the genome. Previous works focused on the problems of Sorting by Reversals, Transpositions, and both rearrangements. \cite{1995-hannenhalli-pevzner} presented a polynomial algorithm for Sorting Signed Permutations by Reversals. For the unsigned case, \cite{1999-caprara} showed that the problem is NP-hard. Since transpositions do not change the sign of the elements, when considering only transpositions we have the problem of Sorting Unsigned Permutations by Transpositions, which was also proved to be NP-hard~\cite{2012-bulteau-etal}. Despite having approximation algorithms proposed since the late 1990's~\cite{1998-walter-etal}, the complexity of the problems of Sorting (Signed or Unsigned) Permutations by Reversals and Transpositions had been unknown until recently, when \cite{2019b-oliveira-etal} presented a proof that these problems are NP-hard. A block-interchange is an operation that swaps any two segments of the genome without changing the orientation of the elements. The problems of Sorting Unsigned Permutations by Block-Interchanges and Sorting Signed Permutations by Reversals and Block-Interchanges are solvable in polynomial time~\cite{1996-christie, 2007-mira-meidanis}.

Other rearrangement operations are transreversals and revrevs. Given two adjacent segments of a genome, a transreversal is an operation that swaps these two segments and inverts the elements of one of these segments, while a revrev is an operation that inverts the elements for each of these two segments. Although the complexity of the problems involving these operations was unknown, many approximation algorithms were presented in the literature~\cite{2009-fertin-etal}. \cite{1999-gu-etal} presented a $2$-approximation algorithm for signed permutations considering reversals, transpositions, and transreversals. \cite{2001-lin-xue} added the revrev operation to the model and gave a $1.75$-approximation algorithm. For the model containing transpositions, transreversals, and revrevs, the best result for signed permutations is a $1.5$-approximation algorithm~\cite{2005-hartman-sharan}. \cite{2010-lou-zhu} presented a $2.25$-approximation algorithm for Sorting Unsigned Permutations by Reversals, Transpositions, and Transreversals.

The traditional approach for the genome rearrangements problems is to consider that every rearrangement has the same cost and, thus, the sorting distance consists in finding a minimum length sorting sequence of rearrangements. The weighted approach was motivated by the observation that some rearrangements are more likely to occur than others~\cite{2007-bader-ohlebusch,1996-blanchette-etal}. In a weighted approach, each rearrangement has an associated cost and the goal is to find a minimum-cost sorting sequence of rearrangements. We use $w_1$ to represent the weight of reversals and $w_2$ to represent the weights of transpositions, transreversals, and revrevs. For values of $w_1$ and $w_2$ such that $1 \leq w_2 / w_1 \leq 2$, \cite{2007-bader-ohlebusch} gave a $1.5$-approximation algorithm for the model containing reversals, transpositions, and transreversals on signed permutations. For the same problem and considering $w_2 / w_1 = 2$, \cite{2002-eriksen} presented a $7/6$-approximation and a polynomial-time approximation scheme. We show that the problems of Sorting (Signed or Unsigned) Permutations by Rearrangements are NP-hard for rearrangement models that include transpositions or the combination of reversals and transpositions alongside transreversals and revrevs, considering that $w_2/w_1 \leq 1.5$.

\cite{2018-alexandrino-etal} introduced a new cost function equal to the number of fragmentations (i.e., breaks of adjacent elements) caused by a rearrangement, and they presented approximation algorithms for models containing reversals and transpositions, considering this cost function. \cite{2020-alexandrino-etal} also considered fragmentation-weighted operations, presenting better approximation algorithms for some permutation classes. In this approach, prefix and suffix operations, which respectively modify the beginning and the end of the genome, cause less fragmentation in the genome and, so, they cost less. Using a parsimony criterion, the problem is modeled so that the number of fragmentations in the genome is minimized during the sorting process. \novo{We show that the problems of Sorting (Signed or Unsigned) Permutations by Transpositions, or by Reversals and Transpositions, are NP-hard considering the fragmentation cost function proposed by \cite{2018-alexandrino-etal, 2020-alexandrino-etal}. We also show that these problems are NP-hard for other combinations of weights related to the number of fragmentations caused by the operations.}

This work is organized as follows. Section~\ref{sec:definition} presents definitions and notations related to the problems. Section~\ref{sec:trans_revrev} shows hardness proofs for the models containing transreversals and revrevs. Section~\ref{sec:frag} shows hardness proofs for the fragmentation-weighted problems containing transpositions.
At last, Section~\ref{sec:conclusion} presents final considerations and future work.

\section{Definitions}\label{sec:definition}

Considering the case where a genome does not have repeated genes, a genome $\mathcal{G}$ is modeled as a permutation whose elements represent conserved blocks. If the orientation of the genes is known, $\mathcal{G}$ is represented as a signed permutation, and $\mathcal{G}$ is represented as an unsigned permutation otherwise. In this case, the problem of finding the rearrangement distance between two genomes $\mathcal{G}_1$ and $\mathcal{G}_2$ is equivalent to the problem of finding the sorting rearrangement distance of a permutation~\cite{2009-fertin-etal}.

A \emph{signed permutation} is represented as $\pi = (\pi_1~\pi_2~\ldots~\pi_n)$, such that $\pi_i \in \{{-n}, \ldots, -1, +1, \dots, +n \}$ and $|\pi_i| \neq |\pi_j|$ $\iff$ $i \neq j$, for all $i$ and $j$. An \emph{unsigned permutation} is also represented as $\pi = (\pi_1~\pi_2~\ldots~\pi_n)$, but $\pi_i \in \{1, 2, \dots, n \}$ and $\pi_i \neq \pi_j$ $\iff$ $i \neq j$, for all $i$ and $j$. The \emph{identity permutation}, $\iota = (1~2~\ldots~n)$, is the sorted permutation, and it is the target of the sorting problems. For signed permutations, we have $\iota = ({+1}~{+2}~\ldots~{+n})$. The \emph{reverse permutation} is defined as $\eta = (n~(n-1)~\ldots~1)$, for unsigned permutations, and $\bar{\eta} = ({-n}~{-(n-1)~\ldots~1})$, for signed permutations.

A \emph{rearrangement model} $\mathcal{M}$ is the set of allowed operations in a rearrangement problem. Considering the unweighted approach, given a rearrangement model $\mathcal{M}$ and a permutation $\pi$, the sorting distance $d_{\mathcal{M}}(\pi)$ is equal to the minimum number of rearrangements from $\mathcal{M}$ that sorts the permutation $\pi$.

Considering a weighted function $w: \mathcal{M} \rightarrow \mathbb{R}$, given a rearrangement $\mathcal{M}$ and a permutation $\pi$, the sorting distance $d_{\mathcal{M}}(\pi)$ is equal to $\sum_{i=1}^{\ell} w(\beta_i)$ such that $\beta_i \in \mathcal{M}$, for $1 \leq i \leq \ell$, $\pi \comp \beta_1 \comp \ldots \comp \beta_\ell = \iota$, and $\sum_{i=1}^{\ell} w(\beta_i)$ is minimum.
For a sequence of rearrangements $S = \beta_1, \beta_2, \ldots, \beta_\ell$, we have that $w(S) = \sum_{i = 1}^{\ell} w(\beta_i)$. The unweighted approach is equivalent to using a unitary weight for all rearrangements.

A \emph{reversal} is a rearrangement which inverts a segment of the genome and, when applied to a signed permutation, flips the sign of the elements in this segment. A \emph{transposition} is a rearrangement that exchanges the position of two adjacent segments of the genome. Next, we formally define these operations.

\begin{definition}
  Considering an unsigned permutation $\pi$, a reversal $\rho(i,j)$, with $1 \leq i < j \leq n$, is an operation that when applied to $\pi$ transforms it in the permutation $\pi \comp \rho(i,j) = (\pi_1~\ldots~\pi_{i-1}~\underline{\pi_j~\pi_{j-1}~\ldots~\pi_{i+1}~\pi_i}~\pi_{j+1}~\ldots~\pi_n)$.
\end{definition}

\begin{definition}
  Considering a signed permutation $\pi$, a reversal $\signedrho(i,j)$, with $1 \leq i \leq j \leq n$, is an operation that when applied to $\pi$ transforms it in the permutation $\pi \comp \signedrho(i,j) = (\pi_1~\ldots~\pi_{i-1}~\underline{{-\pi_j}~{-\pi_{j-1}}~\ldots~{-\pi_{i+1}}~{-\pi_i}}~\pi_{j+1}~\ldots~\pi_n)$.
\end{definition}

\begin{definition}
  A transposition $\tau(i,j,k)$, with $1 \leq i < j < k \leq {n+1}$, is an operation that when applied to $\pi$ transforms it in the permutation $\pi \comp \tau(i,j,k) = (\pi_1~\ldots~\pi_{i-1}~\underline{\pi_j~\ldots~\pi_{k-1}}~\underline{\pi_i~\ldots~\pi_{j-1}}~\pi_{k}~\ldots~\pi_n)$.
\end{definition}

For two adjacent segments $A$ and $B$, a \emph{transreversal} is a rearrangement that inverts the elements of $A$ (type one) or $B$ (type two) and exchanges the position of $A$ and $B$. Also, when applied to a signed permutation, it flips the sign of the elements in the inverted segment. A \emph{revrev} inverts each of two adjacent segments and, when applied to a signed permutation, flips the sign of the elements affected.

\begin{definition}
  Considering an unsigned permutation $\pi$, a transreversal Type 1 ${\rho\tau}_1(i,j,k)$ and a transreversal Type 2 ${\rho\tau}_2(i,j,k)$, with $1 \leq i < j < k \leq {n+1}$, are operations that transform a permutation $\pi$ in the following way:
  \small{\begin{align*}
    \pi \comp {\rho\tau}_1(i,j,k) = (\ldots~\pi_{i-1}~\underline{\pi_j~\ldots~\pi_{k-1}}~\underline{\pi_{j-1}~\ldots~\pi_{i}}~\pi_{k}~\ldots), \\
    \pi \comp {\rho\tau}_2(i,j,k) = (\ldots~\pi_{i-1}~\underline{\pi_{k-1}~\ldots~\pi_{j}}~\underline{\pi_i~\ldots~\pi_{j-1}}~\pi_k~\ldots).
  \end{align*}}
\end{definition}

\begin{definition}
  Considering a signed permutation $\pi$, a transreversal Type 1 $\bar{\rho\tau}_1(i,j,k)$ and a transreversal Type 2 $\bar{\rho\tau}_2(i,j,k)$, with $1 \leq i < j < k \leq {n+1}$, are operations that transform a permutation $\pi$ in the following way:
  \small{\begin{align*}
    \pi \comp \bar{\rho\tau}_1(i,j,k) = &(\ldots~\pi_{i-1}~\underline{\pi_j~\ldots~\pi_{k-1}}~\underline{{-\pi_{j-1}}~\ldots~{-\pi_{i}}}~\pi_{k}~\ldots), \\
    \pi \comp \bar{\rho\tau}_2(i,j,k) = &(\ldots~\pi_{i-1}~\underline{{-\pi_{k-1}}~\ldots~{-\pi_j}}~\underline{\pi_i~\ldots~\pi_{j-1}}~\pi_{k}~\ldots).
  \end{align*}}
\end{definition}

\begin{definition}
  Considering an unsigned permutation $\pi$, a revrev $\rho\rho(i,j,k)$, with $1 \leq i < j < k \leq {n+1}$, is an operation that when applied to $\pi$ transforms it in the permutation $\pi \comp \rho\rho(i,j,k) = (\pi_1~\ldots~\pi_{i-1}~\underline{\pi_{j-1}~\ldots~\pi_{i}}~\underline{\pi_{k-1}~\ldots~\pi_{j}}~\pi_{k}~\ldots~\pi_n)$.
\end{definition}

\begin{definition}
  Considering a signed permutation $\pi$, a revrev $\bar{\rho\rho}(i,j,k)$, with $1 \leq i < j < k \leq {n+1}$, is an operation that when applied to $\pi$ transforms it in the permutation $\pi \comp \bar{\rho\rho}(i,j,k) = (\pi_1~\ldots~\pi_{i-1}~\underline{-\pi_{j-1}~\ldots~-\pi_{i}}~\underline{-\pi_{k-1}~\ldots~-\pi_{j}}~\pi_{k}~\ldots$ $\pi_n)$.
\end{definition}

We use $\rho\tau$ or $\bar{\rho\tau}$ to denote both types of transreversals generically.

% \begin{definition}
% A problem of sorting permutations by rearrangements receives as input a rearrangement model $\mathcal{M}$, a weighted function $w$, and a permutation $\pi$, and consists in finding a minimum cost sequence that sorts the permutation $\pi$ (i.e. it transforms $\pi$ into $\iota$) using only rearrangements in the model $\mathcal{M}$.
% \end{definition}

\subsection{Breakpoints}\label{sec:breakpoints}

The proofs presented in the next sections use the concept of breakpoints in a permutation. This concept is widely used in algorithms for sorting permutations by rearrangements~\cite{2009-fertin-etal}. The definition of breakpoints depends on the rearrangement model considered and whether the permutation is signed or unsigned. First, we define what an extended permutation is.

\begin{definition}
  Given a permutation $\pi$, we extend $\pi$ by adding elements $\pi_0 = 0$ and $\pi_{n+1} = n + 1$. These elements are positive when considering signed permutations. Furthermore, these elements are never affected by a rearrangement.
\end{definition}

Next, we define the two types of breakpoints.

\begin{definition}\label{def:rev_b}
An unsigned reversal breakpoint exists between a pair of consecutive elements $(\pi_i, \pi_{i+1})$ if $|\pi_{i+1} - \pi_i| \neq 1$, for $0 \leq i \leq n$.
\end{definition}

\begin{example}
  \novo{For the unsigned permutation $\pi = (0~4~3~5~1~2~6~7)$ (which is in the extended form), we have the following breakpoints from Definition~\ref{def:rev_b} (represented by the symbol $\circ$):}
  \begin{align*}
    \pi = (0 \, \circ \, 4~3  \, \circ \, 5 \, \circ \, 1~2 \, \circ \, 6~7).
  \end{align*}
\end{example}

\begin{definition}\label{def:t_b}
A transposition breakpoint (also called signed reversal breakpoint) exists between a pair of consecutive elements $(\pi_i, \pi_{i+1})$ if $\pi_{i+1} - \pi_i \neq 1$, for $0 \leq i \leq n$.
\end{definition}

\begin{example}
  \novo{For the unsigned permutation $\pi = (0~4~3~5~1~2~6~7)$ (which is in the extended form), we have the following breakpoints from Definition~\ref{def:t_b} (represented by the symbol $\circ$):}
  \begin{align*}
    \pi = (0 \, \circ \, 4 \, \circ \, 3  \, \circ \, 5 \, \circ \, 1~2 \, \circ \, 6~7).
  \end{align*}
\end{example}

\begin{example}
  \novo{For the signed permutation $\pi = (+0~{-4}~{-3}~{+5}~{+1}~{+2}~{-6}~{+7})$ (which is in the extended form), we have the following breakpoints from Definition~\ref{def:t_b} (represented by the symbol $\circ$):}
  \begin{align*}
    \pi = (+0\, \circ \,{-4}~{-3}\, \circ \,{+5}\, \circ \,{+1}~{+2}\, \circ \,{-6}\, \circ \,{+7}).
  \end{align*}
\end{example}

\novo{
We use the number of breakpoints in a permutation as an indicator of how far this permutation is from the identity permutation.
When considering signed permutations, all models use signed reversal breakpoints (Definition~\ref{def:t_b}) and, in this way, only the signed identity permutation has zero breakpoints.
Considering unsigned permutations, the reverse permutation $\eta = (n~(n-1)~\ldots~1)$ has only $2$ unsigned reversal breakpoints (Definition~\ref{def:rev_b}) and it has $n+1$ transposition breakpoints (Definition~\ref{def:t_b}). The permutation $\eta$ can be transformed into $\iota$ using only one reversal or at most two operations from the models that have transpositions alongside transreversals or revrevs, but when using only transpositions it needs a sequence of size $\Theta(n)$~\cite{2000b-meidanis-etal}. Therefore, transposition breakpoints (Definition~\ref{def:t_b}) are used for the model considering only transpositions and the other models for unsigned permutations use unsigned reversal breakpoints (Definition~\ref{def:rev_b}).
}

\begin{definition}
Given a model $\mathcal{M}$ and a permutation $\pi$, let $b_{\mathcal{M}}(\pi)$ denote the number of breakpoints in $\pi$.
\end{definition}

\begin{definition}
Given a model $\mathcal{M}$, a rearrangement $\beta \in \mathcal{M}$, and a permutation $\pi$, let $\Delta b_\mathcal{M}(\pi, \beta) = b_\mathcal{M}(\pi) - b_\mathcal{M}(\pi \comp \beta)$ denote the variation in the number of breakpoints after applying $\beta$ to $\pi$.
\end{definition}

We note that the identity permutation is the only one without breakpoints of definitions~\ref{def:rev_b} and~\ref{def:t_b}.
% Since a reversal breaks the permutation on two positions, we have that $-2 \leq \Delta b_\mathcal{M}(\pi, \rho) \leq 2$ and $-2 \leq \Delta b_\mathcal{M}(\pi, \bar{\rho}) \leq 2$ for any reversal. If an operation $\beta$ is a transposition, transreversal, or revrev, we have that $-3 \leq \Delta b_\mathcal{M}(\pi, \beta) \leq 3$ since there operations break the permutation on three positions.

% As the identity permutation is the only one without breakpoints, the bounds on the variation in the number of breakpoints are used in lower bounds for the sorting distance.

\begin{definition}
Considering a type of breakpoint, a \emph{strip} is a maximal sequence of elements without breakpoints between consecutive elements in the sequence.
\end{definition}

For unsigned permutations, a strip $(\pi_i~\pi_{i+1}~\ldots~\pi_j)$, with $0 \leq i < j \leq n + 1$, is called \emph{increasing} if $\pi_{k+1} > \pi_{k}$ for all $i \leq k < j$; otherwise the strip is called \emph{decreasing}. A \emph{singleton} is a strip of length one. A singleton is called increasing if it is equal to $(\pi_0)$ or $(\pi_{n+1})$, and it is called decreasing otherwise. Note that the elements $\pi_0$ and $\pi_{n+1}$ always belong to increasing strips. For signed permutations, a strip is called \emph{positive} if its elements are positive, and \emph{negative} otherwise.

\subsection{Sorting Permutations by Transpositions}

The hardness proofs presented in sections \ref{sec:trans_revrev} and \ref{sec:frag} rely on reductions from the following decision problem:

\begin{definition}
  \textit{SB3T} Problem: Given an unsigned permutation $\pi$, decide if it is possible to sort $\pi$ with a sequence of $b_{\tau}(\pi)/3$ transpositions.
\end{definition}

\cite{2012-bulteau-etal} proved that \textit{SB3T} is NP-hard by reducing from the Boolean Satisfiability (\textit{SAT}) problem.

\section{Models with Transreversal and Revrev}\label{sec:trans_revrev}

In this section, we prove that the problems with the following models are NP-hard for signed and unsigned permutations:
\begin{itemize}
  \item $\mathcal{M}_1 = \{\tau, \rho\tau \}$ or $\bar{\mathcal{M}_1} = \{\tau, \bar{\rho\tau} \}$: Transpositions and Transreversals;
  \item $\mathcal{M}_2 = \{\rho, \tau, \rho\tau \}$ or $\bar{\mathcal{M}_2} = \{\bar{\rho}, \tau, \bar{\rho\tau} \}$: Reversals, Transpositions, and Transreversals;
  \item $\mathcal{M}_3 = \{\tau, \rho\rho \}$ or $\bar{\mathcal{M}_3} = \{\tau, \bar{\rho\rho} \}$: Transpositions and Revrevs;
  \item $\mathcal{M}_4 = \{\rho, \tau, \rho\rho \}$ or $\bar{\mathcal{M}_4} = \{\bar{\rho}, \tau, \bar{\rho\rho} \}$: Reversals, Transpositions, and Revrevs;
  \item $\mathcal{M}_5 = \{\tau, \rho\tau, \rho\rho \}$ or $\bar{\mathcal{M}_5} = \{\tau, \bar{\rho\tau}, \bar{\rho\rho} \}$: Transpositions, Transreversals, and Revrevs.
  \item $\mathcal{M}_6 = \{\rho, \tau, \rho\tau, \rho\rho \}$ or $\bar{\mathcal{M}_6} = \{\bar{\rho}, \tau, \bar{\rho\tau}, \bar{\rho\rho} \}$: Reversals, Transpositions, Transreversals, and Revrevs.
\end{itemize}

Besides that, we use costs $w_1$ for all reversals and $w_2$ for all transpositions, transreversals, and revrevs. When $w_1 = w_2$, this problem is equivalent to the unweighted approach. If a model does not contain reversals, we consider that $w_1 = \infty$.

\begin{definition}
  \textit{WSR} Problem: Given a rearrangement model $\mathcal{M}$, weights $w_1$ and $w_2$, a permutation $\pi$, and a value $k$, decide if it is possible to sort $\pi$ with a sequence of rearrangements $S$, such that $w(S) \leq k$ and every rearrangement of $S$ is in $\mathcal{M}$, that is, $d_\mathcal{M}(\pi) \leq k$.
\end{definition}

Now, we present lower bounds for the sorting distance and bounds for the variation in the number of breakpoints for some permutation families.

\begin{lemma}\label{lemma:lb_signed}
  For any signed permutation $\pi$, the weights $w_1$ and $w_2$, and model $\mathcal{M} \in \{\bar{\mathcal{M}_1}, \bar{\mathcal{M}_2}, \bar{\mathcal{M}_3}, \bar{\mathcal{M}_4}, \bar{\mathcal{M}_5}, \bar{\mathcal{M}_6}\}$, we have that
  $$d_{\mathcal{M}}(\pi) \geq \min\left\{\frac{w_1}{2}, \frac{w_2}{3}\right\} b_{\mathcal{M}}(\pi).$$
\end{lemma}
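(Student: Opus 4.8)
The strategy is a standard breakpoint-counting argument: bound how much each type of rearrangement can decrease the breakpoint count, then conclude that any sorting sequence must be long enough (in weighted terms) to eliminate all $b_{\mathcal{M}}(\pi)$ breakpoints. Since the identity is the only permutation with zero breakpoints and all these signed models use signed reversal breakpoints (Definition~\ref{def:t_b}), a sorting sequence $S = \beta_1, \ldots, \beta_\ell$ must satisfy $\sum_{i} \Delta b_{\mathcal{M}}(\pi^{(i-1)}, \beta_i) = b_{\mathcal{M}}(\pi)$, where $\pi^{(0)} = \pi$ and $\pi^{(i)} = \pi^{(i-1)} \comp \beta_i$. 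The key combinatorial facts I would establish first are: a reversal $\bar\rho$ affects only two adjacencies (at its two endpoints), so $\Delta b_{\mathcal{M}}(\pi, \bar\rho) \leq 2$; and each of $\tau$, $\bar{\rho\tau}_1$, $\bar{\rho\tau}_2$, $\bar{\rho\rho}$ affects exactly three adjacencies (at positions $i$, $j$, $k$ in the notation of the definitions), so $\Delta b_{\mathcal{M}}(\pi, \beta) \leq 3$ for any such $\beta$.

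\textbf{Main argument.} Given these bounds, I would argue as follows. Let $S$ be any optimal sorting sequence for $\pi$ under $\mathcal{M}$, and let $r$ be the number of reversals in $S$ and $t$ the number of non-reversal operations (transpositions, transreversals, revrevs). Then the total breakpoint reduction is at most $2r + 3t$, and this must be at least $b_{\mathcal{M}}(\pi)$, so $2r + 3t \geq b_{\mathcal{M}}(\pi)$. The weighted cost is $w(S) = w_1 r + w_2 t$. I need $w_1 r + w_2 t \geq \min\{w_1/2, w_2/3\}\, b_{\mathcal{M}}(\pi)$. Writing $c = \min\{w_1/2, w_2/3\}$, note $w_1 \geq 2c$ and $w_2 \geq 3c$, so $w_1 r + w_2 t \geq 2cr + 3ct = c(2r + 3t) \geq c\, b_{\mathcal{M}}(\pi)$, which is exactly the claimed bound. (When the model has no reversals we set $w_1 = \infty$, so $r = 0$ and the bound reduces to $(w_2/3) b_{\mathcal{M}}(\pi)$, consistent with the formula.)

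\textbf{Where the work lies.} The only nontrivial step is verifying the per-operation breakpoint bounds, and in particular confirming that each operation touches only the claimed number of adjacencies. For a reversal $\bar\rho(i,j)$ the internal adjacencies are of the form $(-\pi_{k+1}, -\pi_k)$, which is a breakpoint iff $(-\pi_k) - (-\pi_{k+1}) \neq 1$, i.e.\ iff $\pi_{k+1} - \pi_k \neq -1$; this need not match the original status, so one might worry, but the standard observation is that for \emph{signed} reversal breakpoints the reversal creates a ``reversed'' strip structure and only the two boundary pairs $(\pi_{i-1}, -\pi_j)$ and $(-\pi_i, \pi_{j+1})$ can change breakpoint status relative to the original $(\pi_{i-1},\pi_i)$ and $(\pi_j,\pi_{j+1})$ — the internal ones are preserved in the appropriate signed sense. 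I would verify this carefully (it is essentially the reason signed reversal breakpoints are the ``right'' notion). For transpositions and the three-cut operations, the argument is cleaner: a transposition $\tau(i,j,k)$ or any variant with cut points $i < j < k$ rearranges three blocks, and only the three pairs straddling positions $i$, $j$, $k$ in the result differ from the original, so at most three breakpoints can be removed. The transreversal and revrev cases require checking that the sign flips on the reversed blocks do not create additional affected adjacencies beyond the three cut points — again this follows because within a flipped block the internal adjacency $(-\pi_{k+1}, -\pi_k)$ has the same breakpoint status (in the signed sense) as would a correspondingly oriented original strip, so only the three cut boundaries matter. Assembling these lemmas and then the one-line weighted inequality above completes the proof.
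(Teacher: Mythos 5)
Your proof is correct and follows essentially the same route as the paper: bound the breakpoint change per operation ($\leq 2$ for a reversal, $\leq 3$ for transpositions, transreversals, and revrevs, since they cut the permutation at two resp.\ three positions), note that only the identity has zero breakpoints, and amortize the weights — your explicit inequality $w_1 r + w_2 t \geq c(2r+3t) \geq c\, b_{\mathcal{M}}(\pi)$ with $c = \min\{w_1/2, w_2/3\}$ is just a cleaner writing of the paper's ``minimum cost to remove a breakpoint'' step. One small arithmetic slip in your aside: for an internal pair of a reversed block, $(-\pi_k) - (-\pi_{k+1}) = \pi_{k+1} - \pi_k$, so the signed breakpoint condition is \emph{exactly} the same as before the reversal — the preservation you worried about is immediate, and your conclusion stands.
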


\begin{proof}
  Since a reversal breaks the permutation on two positions, we have that $-2 \leq \Delta b_\mathcal{M}(\pi, \rho) \leq 2$ and $-2 \leq \Delta b_\mathcal{M}(\pi, \bar{\rho}) \leq 2$ for any reversal. If an operation $\beta$ is a transposition, transreversal, or revrev, we have that $-3 \leq \Delta b_\mathcal{M}(\pi, \beta) \leq 3$, since these operations break the permutation on three positions.

  The identity permutation is the only one without breakpoints and, consequently, a sorting sequence removes $b_\mathcal{M}(\pi)$ breakpoints. The minimum cost to remove a breakpoint is equal to $\min\left\{\frac{w_1}{2}, \frac{w_2}{3}\right\}$. Therefore, we conclude that any sorting sequence has cost greater than or equal to $\min\left\{\frac{w_1}{2}, \frac{w_2}{3}\right\} b_{\mathcal{M}}(\pi)$.
\qed\end{proof}

\begin{lemma}\label{lemma:signed_var_b}
  For any signed permutation $\pi$ such that $\pi$ has only positive strips:
  \begin{itemize}
     \item $\Delta b_{\bar{\rho}}(\pi, \bar{\rho}) \leq 0$, for any reversal $\bar{\rho}$;
     \item $\Delta b_{\bar{\rho\tau}}(\pi, \bar{\rho\tau}) \leq 1$, for any transreversal $\bar{\rho\tau}$;
     \item $\Delta b_{\bar{\rho\rho}}(\pi, \bar{\rho\rho}) \leq 1$, for any revrev $\bar{\rho\rho}$.
   \end{itemize}
\end{lemma}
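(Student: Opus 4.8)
The plan is to analyze each of the three operation types separately, in each case exploiting the hypothesis that $\pi$ has only positive strips, which means every breakpoint of $\pi$ is a signed reversal breakpoint sitting between two elements that do not form an ascending consecutive pair, and no breakpoint is "caused" merely by a sign flip inside a strip. Since all the models in question use signed reversal breakpoints (Definition~\ref{def:t_b}), I would count how many of the (at most two or three) cut positions of an operation can become non-breakpoints after the operation is applied.

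First, for a reversal $\bar\rho(i,j)$: it affects only the two junctions $(\pi_{i-1},\pi_i)$ and $(\pi_j,\pi_{j+1})$, turning them into $(\pi_{i-1},-\pi_j)$ and $(-\pi_i,\pi_{j+1})$. The key observation is that a new non-breakpoint at either junction would require an element to become the immediate ascending successor of its neighbor, but since the reversed segment now consists of negative elements (as $\pi$ had only positive strips, so $\pi_i,\ldots,\pi_j$ were all positive), $-\pi_j$ is negative and cannot equal $\pi_{i-1}+1$ unless $\pi_{i-1}$ is itself negative — but then $\pi_{i-1}$ would have been a singleton or part of a negative strip, contradiction, so at worst this junction stays a breakpoint; a symmetric argument handles $(-\pi_i,\pi_{j+1})$. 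Hence no new non-breakpoints are created while the reversal may still destroy breakpoints, giving $\Delta b_{\bar\rho}(\pi,\bar\rho)\le 0$. (One must also check the reversal cannot be the degenerate case $i=j$, where a single negated element could conceivably sit between two elements forming a consecutive pair — but a single negated positive element is negative and cannot be the successor of a positive element nor have a positive element as its successor, so that case also yields $\le 0$.)

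Next, for a transreversal $\bar{\rho\tau}_1$ or $\bar{\rho\tau}_2$ and for a revrev $\bar{\rho\rho}$: each cuts the permutation at three positions, so a priori $\Delta b\le 3$, and I need to improve this to $\le 1$. The argument is that these operations move one block so that it becomes adjacent to a new neighbor while reversing (and negating) exactly one of the two moved blocks (for transreversals) or both blocks (for revrev). At the two junctions adjacent to a negated block, the same sign-parity obstruction as in the reversal case applies: a negated (hence negative) element cannot be the ascending successor of a positive neighbor, and cannot have a positive successor, so those junctions cannot become non-breakpoints. That leaves at most one junction — the one between the two non-negated... wait, for a transreversal only one block is negated, so the single junction where a non-negated block meets a non-negated block (or meets $\pi_{k-1}/\pi_k$) is the only place a new adjacency between two positive elements can form, so at most one breakpoint is removed there; for the revrev both blocks are negated, but the junction between the two negated blocks could have two negative elements $-\pi_{j-1}$ and $-\pi_k$ wait let me recheck the revrev form — $(\ldots -\pi_{j-1}\ldots -\pi_i\ -\pi_{k-1}\ldots -\pi_j\ldots)$, so at the middle junction we have $(-\pi_i, -\pi_{k-1})$, two negatives, which can form an ascending consecutive pair, so again at most one junction contributes. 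Thus in all three sub-cases at most one breakpoint is removed while sign parity forbids the other junctions from improving, and we also never create negative strips in a way that adds... actually we might create breakpoints, which only helps the bound — so $\Delta b\le 1$.

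The main obstacle I anticipate is being careful and exhaustive about the boundary/degenerate adjacencies: the extended elements $\pi_0=0$ and $\pi_{n+1}=n+1$ are positive and immovable, so junctions involving them behave like junctions with positive elements, but one should double-check that an operation abutting position $0$ or $n+1$ doesn't sneak in an extra removed breakpoint; likewise one must handle the case where a moved block has length one (so the "negated block" is a single negative element) uniformly with the general case. I would organize the write-up as a short case analysis — reversal, then transreversal (Types 1 and 2 together, by symmetry), then revrev — each time identifying the at most one junction that can turn from a breakpoint into a non-breakpoint and invoking the sign-parity argument for the rest.
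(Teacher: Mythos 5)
Your proposal is correct and follows essentially the same route as the paper's proof: for each operation you examine the two or three affected junctions and use the sign argument (a negated, hence negative, element next to a positive one cannot form a consecutive ascending pair, since the permutation has only positive elements) to show that every junction bordering a negated block stays a breakpoint, leaving no removable junction for a reversal and at most one (namely $(\pi_{i-1},\pi_j)$ for a transreversal and $(-\pi_i,-\pi_{k-1})$ for a revrev) for the other operations. The paper's proof is just a more compact write-up of this same case analysis.
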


\begin{proof}
  Consider a reversal $\bar{\rho}$ and let $\pi' = \pi \comp \bar{\rho} = (\pi_1~\ldots~\pi_{i-1}~\underline{{-\pi_j}~\ldots~{-\pi_i}}~\pi_{j+1}$ $\ldots~\pi_n)$, with $1 \leq i \leq j \leq n$. Suppose for the sake of contradiction that $\Delta b_{\bar{\rho}}(\pi, \bar{\rho}) > 0$, which indicates that (i) $(\pi_{i-1}, -\pi_{j})$ is not a breakpoint or (ii) $(-\pi_{i}, \pi_{j+1})$ is not a breakpoint. If $(\pi_{i-1}, -\pi_{j})$ is not a breakpoint, then $\pi_{i-1}$ and $-\pi_{j}$ must have the same sign. In the same way, if $(-\pi_{i}, \pi_{j+1})$ is not a breakpoint, then $-\pi_{i}$ and $\pi_{j+1}$ must have the same sign, which contradicts the fact that $\pi$ has only positive elements. Therefore, $\Delta b_{\bar{\rho}}(\pi, \bar{\rho}) \leq 0$.

  Consider a transreversal Type 1 $\bar{\rho\tau}_1$ and let $\pi' =$ $\pi \comp \bar{\rho\tau}_1 =$ $(\pi_1~\ldots~\pi_{i-1}$ $\underline{\pi_j~\ldots~\pi_{k-1}}$ $\underline{-\pi_{j-1}~\ldots~-\pi_{i}}$ $\pi_{k}~\ldots~\pi_n)$, with $1 \leq i < j < k \leq n+1$. Using a similar argument to the one used for reversals: the pairs $(\pi_{k-1}, -\pi_{j-1})$ and $(-\pi_{i}, \pi_k)$ must be breakpoints and only the pair $(\pi_{i-1}, \pi_j)$ may not be a breakpoint. Therefore, $\Delta b_{\bar{\rho\tau}}(\pi, \bar{\rho\tau}_1) \leq 1$. We use an analogous argument for a transreversal Type 2.

  Consider a revrev  $\bar{\rho\rho}$ and let $\pi' = \pi \comp \bar{\rho\rho} = (\pi_1~\ldots~\pi_{i-1}$ $\underline{-\pi_{j-1}~\ldots~-\pi_{i}}$ $\underline{-\pi_{k-1}~\ldots~-\pi_{j}}~\pi_{k}~\ldots~\pi_n)$, with $1 \leq i < j < k \leq n+1$. The pairs $(\pi_{i-1}, -\pi_{j-1})$ and $(-\pi_j, \pi_k)$ must be breakpoints and only the pair $(-\pi_{i}, -\pi_{k-1})$ may not be a breakpoint. Therefore, $\Delta b_{\bar{\rho\rho}}(\pi, \bar{\rho\rho}) \leq 1$.
\qed\end{proof}

\begin{theorem}\label{theorem:wsr_signed}
Considering $\mathcal{M} = \{\bar{\mathcal{M}_1}, \bar{\mathcal{M}_2}, \bar{\mathcal{M}_3}, \bar{\mathcal{M}_4}, \bar{\mathcal{M}_5},$ $ \bar{\mathcal{M}_6}\}$ and $w_2/w_1 \leq 1.5$, \textit{WSR} is NP-hard.
\end{theorem}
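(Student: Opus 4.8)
The plan is to reduce from the \textit{SB3T} problem, which \cite{2012-bulteau-etal} proved NP-hard. Given an instance $\pi$ of \textit{SB3T} (an unsigned permutation on $n$ elements), I would build the signed permutation $\pi^{+}$ obtained from $\pi$ by assigning a positive sign to every element, and output the \textit{WSR} instance $(\mathcal{M}, w_1, w_2, \pi^{+}, k)$ with $k = \tfrac{w_2}{3}\, b_{\tau}(\pi)$. Since all six signed models measure distance with signed reversal breakpoints (Definition~\ref{def:t_b}) and the sign of a (positive) element does not affect the arithmetic condition $\pi_{i+1}-\pi_i \ne 1$, we have $b_{\mathcal{M}}(\pi^{+}) = b_{\tau}(\pi)$, so $k = \tfrac{w_2}{3}\, b_{\mathcal{M}}(\pi^{+})$; the construction is clearly polynomial. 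The forward direction is immediate: if $\pi$ can be sorted by a sequence of $b_{\tau}(\pi)/3$ transpositions, the very same transpositions (they belong to every $\bar{\mathcal{M}_i}$ and ignore signs) sort $\pi^{+}$ at total cost $w_2 \cdot b_{\tau}(\pi)/3 = k$, so $(\mathcal{M},w_1,w_2,\pi^{+},k)$ is a yes-instance of \textit{WSR}.

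For the converse, suppose $S = \beta_1,\dots,\beta_\ell$ sorts $\pi^{+}$ with $w(S) \le k$. Because $w_2/w_1 \le 1.5$ we have $\min\{w_1/2, w_2/3\} = w_2/3$, so Lemma~\ref{lemma:lb_signed} gives $w(S) \ge \tfrac{w_2}{3}\, b_{\mathcal{M}}(\pi^{+}) = k$, hence $w(S) = k$. Writing $\pi^{(0)} = \pi^{+}$ and $\pi^{(j)} = \pi^{(j-1)} \comp \beta_j$, assign to $\beta_j$ the quantity $\delta_j = w(\beta_j) - \tfrac{w_2}{3}\,\Delta b_{\mathcal{M}}(\pi^{(j-1)},\beta_j)$. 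From the bounds $\Delta b \le 2$ for reversals and $\Delta b \le 3$ for transpositions, transreversals, and revrevs (used in the proof of Lemma~\ref{lemma:lb_signed}), together with $w_2 \le \tfrac{3}{2}w_1$, each $\delta_j \ge 0$; and $\sum_{j} \delta_j = w(S) - \tfrac{w_2}{3}\, b_{\mathcal{M}}(\pi^{+}) = 0$, so $\delta_j = 0$ for every $j$. I would then argue by induction on $j$ that $\pi^{(j-1)}$ has only positive strips and $\beta_j$ is a transposition with $\Delta b_{\mathcal{M}}(\pi^{(j-1)},\beta_j)=3$: the base case holds since $\pi^{+}$ has only positive elements; in the inductive step, a reversal applied to a permutation with only positive strips satisfies $\Delta b \le 0$ (Lemma~\ref{lemma:signed_var_b}), forcing $\delta_j \ge w_1 > 0$, and a transreversal or revrev satisfies $\Delta b \le 1$, forcing $\delta_j \ge \tfrac{2w_2}{3} > 0$ — both contradicting $\delta_j = 0$; hence $\beta_j$ is a transposition, which keeps all elements positive, so $\pi^{(j)}$ again has only positive strips. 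Consequently $S$ consists of $\ell = b_{\mathcal{M}}(\pi^{+})/3 = b_{\tau}(\pi)/3$ transpositions, and dropping the (uniformly positive) signs the same sequence sorts the unsigned permutation $\pi$, so $\pi$ is a yes-instance of \textit{SB3T}. Finally, for the models $\bar{\mathcal{M}_1}, \bar{\mathcal{M}_3}, \bar{\mathcal{M}_5}$ without reversals one sets $w_1 = \infty$, which only makes the inequalities above easier.

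The main obstacle is this converse direction, namely ruling out mixed optimal sequences: one must show that no combination of reversals, transreversals, and revrevs can be used without strictly exceeding the cost $\tfrac{w_2}{3}\, b_{\mathcal{M}}(\pi^{+})$. The hypothesis $w_2/w_1 \le 1.5$ is exactly the threshold that makes a reversal no cheaper per removed breakpoint than a transposition, while Lemma~\ref{lemma:signed_var_b} is what confines reversals, transreversals, and revrevs to removing too few breakpoints as long as all strips stay positive — which they do precisely because transpositions never create negative elements. Packaging these two facts through the non-negative per-operation potential $\delta_j$ is the heart of the argument; the remaining verifications (polynomiality of the reduction, the identity $b_{\mathcal{M}}(\pi^{+}) = b_{\tau}(\pi)$, and the forward direction) are routine.
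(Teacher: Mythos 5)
Your proposal is correct and follows essentially the same route as the paper: a reduction from \textit{SB3T} by signing $\pi$ positively, setting $k = w_2 b_{\tau}(\pi)/3$, using the lower bound of Lemma~\ref{lemma:lb_signed} (with $w_2/w_1 \leq 1.5$ giving $\min\{w_1/2, w_2/3\} = w_2/3$) to force every operation in an optimal sequence to remove its full share of breakpoints, and then invoking Lemma~\ref{lemma:signed_var_b} on the first non-transposition (all strips being positive up to that point) to rule out reversals, transreversals, and revrevs. Your per-operation slack $\delta_j$ and the induction merely formalize the paper's ``first non-transposition'' contradiction, so the argument is the same in substance.
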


\begin{proof}
  \novo{
  Consider $\mathcal{M} = \bar{\mathcal{M}_6}$. The proof is similar for the other models, since our strategy is to show that if an instance is satisfied, then only transpositions are used to sort the permutation and, therefore, a similar argument can be used for the models since they have a subset of the operations allowed in $\bar{\mathcal{M}_6}$.
  }

  We now present a reduction from the \textit{SB3T} problem to \textit{WSR}. Given an instance $\pi = (\pi_1~\ldots~\pi_n)$ for \textit{SB3T}, we construct the instance $(\mathcal{M}, w_1, w_2, \pi', k)$ for \textit{WSR}, where $\pi'$ is the signed permutation $({+\pi_1}~{+\pi_2}~\ldots~{+\pi_n})$ and $k = w_2 b_{\tau}(\pi)/3$.

  Next, we show that the instance $\pi$ is sorted by $b_\tau(\pi)/3$ transpositions if and only if $d_{\mathcal{M}}(\pi') \leq w_2 b_{\tau}(\pi)/3$.

  ($\rightarrow$) If $\pi$ is sorted by a sequence $S$ of length $b_\tau(\pi)/3$, then $S$ also sorts $\pi'$, since $\pi'$ has only positive elements, and $w(S) = w_2 b_\tau(\pi)/3$ (note that $S$ only has transpositions and each transposition has cost $w_2$).

  ($\leftarrow$)
  If $\pi'$ is sorted by a sequence $S$ of cost less than or equal to $w_2 b_{\tau}(\pi)/3$, then we claim that $S$ has only transpositions and, therefore, $S$ also sorts $\pi$ and it has length $b_\tau(\pi)/3$.

  Note that the minimum cost to remove a breakpoint in $\pi'$ is equal to $\min \{ w_1/2,$ $w_2/3 \} = w_2/3$, since $w_2/w_1 \leq 1.5$. Since $\pi'$ has only positive elements, $\pi_{i+1} - \pi_i = 1$ if and only if $\pi'_{i+1} - \pi'_i = 1$, so $b_{\tau}(\pi) = b_{\mathcal{M}}(\pi')$. Therefore, the lower bound of Lemma~\ref{lemma:lb_signed} becomes $w_2 b_{\mathcal{M}}(\pi')/3 = w_2 b_{\tau}(\pi)/3$.

  We have that $w(S)$ is equal to the lower bound $w_2 b_{\tau}(\pi)/3$ and, consequently, every rearrangement of $S$ has to remove exactly $w' \times 3/w_2$ breakpoints, where $w'$ is the rearrangement cost. Note that a sorting sequence removes $b_{\tau}(\pi)/3$ breakpoints.
 Now, suppose that $S$ has an operation that is not a transposition. Let $\beta$ be the first non transposition in $S$ to be applied. Note that before $\beta$ is applied, all strips in the permutation are positive since transpositions do not change the sign of elements. By Lemma~\ref{lemma:signed_var_b}, a reversal does not remove breakpoints and a transreversal or revrev remove at most one breakpoint in permutations that have only positive strips, which contradicts the fact that every rearrangement of $S$ removes $w' \times 3/w_2$ breakpoints. Therefore, $S$ has only transpositions. Also, since $w(S) = w_2 b_{\tau}(\pi)/3$, we conclude that $S$ has length $b_{\tau}(\pi)/3$.
\qed\end{proof}

\begin{lemma}\label{lemma:lb_unsigned}
  For any unsigned permutation $\pi$, the weights $w_1$ and $w_2$, and model $\mathcal{M} \in \{{\mathcal{M}_1}, {\mathcal{M}_2}, {\mathcal{M}_3}, {\mathcal{M}_4}, {\mathcal{M}_5}, {\mathcal{M}_6}\}$, we have that
  $$d_{\mathcal{M}}(\pi) \geq \min\left\{\frac{w_1}{2}, \frac{w_2}{3}\right\} b_{\mathcal{M}}(\pi).$$
\end{lemma}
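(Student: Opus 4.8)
The plan is to mirror the argument of Lemma~\ref{lemma:lb_signed} almost verbatim, replacing the signed reversal breakpoints by the unsigned reversal breakpoints of Definition~\ref{def:rev_b}, which—by the discussion in Section~\ref{sec:breakpoints}—are the notion of breakpoint used for all of the unsigned models $\mathcal{M}_1,\dots,\mathcal{M}_6$ (only the transposition-only model uses Definition~\ref{def:t_b}).

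First I would bound the per-operation variation $\Delta b_\mathcal{M}(\pi,\beta)$. A reversal $\rho(i,j)$ only alters the pairs of consecutive elements at the two cut positions $(\pi_{i-1},\pi_i)$ and $(\pi_j,\pi_{j+1})$: every pair strictly inside the reversed segment keeps the same value of $|\pi_{k+1}-\pi_k|$, because reversing a block does not change the absolute difference of two adjacent elements. Hence $-2 \leq \Delta b_\mathcal{M}(\pi,\rho) \leq 2$. Likewise a transposition, a transreversal (Type~1 or Type~2), or a revrev only alters the three cut positions appearing in its definition—any adjacency interior to a reversed block being again unchanged under Definition~\ref{def:rev_b}—so $-3 \leq \Delta b_\mathcal{M}(\pi,\beta) \leq 3$ for each such $\beta$.

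Then, exactly as in Lemma~\ref{lemma:lb_signed}, I would use that $\iota$ is the only permutation with no breakpoints of Definition~\ref{def:rev_b}, so any sorting sequence $S = \beta_1,\dots,\beta_\ell$ must eliminate all $b_\mathcal{M}(\pi)$ breakpoints. A reversal (cost $w_1$) removes at most $2$ of them and any transposition, transreversal, or revrev (cost $w_2$) removes at most $3$, so the cost of removing a single breakpoint is at least $\min\{w_1/2,\,w_2/3\}$ (with the convention $w_1 = \infty$ when $\mathcal{M}$ contains no reversals, in which case the minimum is $w_2/3$). Summing the contributions of the breakpoints removed by $S$ gives $w(S) \geq \min\{w_1/2,\,w_2/3\}\, b_\mathcal{M}(\pi)$, and since this holds for every sorting sequence it holds for $d_\mathcal{M}(\pi)$.

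I do not expect a real obstacle here; the only point needing a line of care is the observation that, under Definition~\ref{def:rev_b}, an adjacency lying strictly inside a segment flipped by a reversal, transreversal, or revrev is never created or destroyed, so that only the two or three boundary positions can change—this is immediate from $|\pi_{k+1}-\pi_k| = |\pi_k-\pi_{k+1}|$. With the signed bounds $(-2,-3)$ re-established in this way, the rest of the proof is identical to Lemma~\ref{lemma:lb_signed}.
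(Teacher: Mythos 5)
Your proposal is correct and matches the paper's intent: the paper proves this lemma simply by saying it is ``similar to the proof of Lemma~\ref{lemma:lb_signed}'', and you reproduce exactly that argument, with the one adaptation actually needed for the unsigned case (under Definition~\ref{def:rev_b} an adjacency interior to a reversed block is preserved since $|\pi_{k+1}-\pi_k|$ is symmetric, so a reversal still affects only two positions and the other operations only three). No gap to report.
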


\begin{proof}
  Similar to the proof of Lemma~\ref{lemma:lb_signed}.
\qed\end{proof}

\begin{lemma}\label{lemma:unsigned_var_b}
  For any unsigned permutation $\pi$ such that $\pi$ has only increasing strips,
  \begin{itemize}
     \item $\Delta b_{{\rho}}(\pi, {\rho}) \leq 0$, for any reversal ${\rho}$;
     \item $\Delta b_{{\rho}\tau}(\pi, {\rho}\tau) \leq 1$, for any transreversal ${\rho}\tau$;
     \item $\Delta b_{{\rho}{\rho}}(\pi, {\rho}{\rho}) \leq 1$, for any revrev ${\rho}{\rho}$.
   \end{itemize}
\end{lemma}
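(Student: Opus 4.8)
The plan is to mirror the argument of Lemma~\ref{lemma:signed_var_b}, but to replace the sign-based obstruction with a parity/orientation obstruction coming from the hypothesis that every strip of $\pi$ is increasing. The key observation I would establish first is this: if $\pi$ has only increasing strips (with respect to the unsigned reversal breakpoint of Definition~\ref{def:rev_b}), then whenever a rearrangement reverses a maximal block of $\pi$ and places it next to some element, the newly created adjacency cannot be a non-breakpoint, because a reversed increasing run is decreasing, and gluing a decreasing piece onto a neighbour forces $|\pi_{k+1}-\pi_k|\neq 1$ unless the piece has length one — and a length-one ``reversed'' strip contributes nothing. I would state this as a small preliminary claim: \emph{reversing any segment of $\pi$ destroys at least as many adjacencies (non-breakpoints) inside the segment as it can possibly create at its two endpoints.} This is the unsigned analogue of ``the two elements must have the same sign'', and it is exactly what the three bullets need.

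Concretely, for the first bullet I take a reversal $\rho(i,j)$ and set $\pi' = \pi\comp\rho(i,j)$. Only the two boundary pairs $(\pi_{i-1},\pi_j)$ and $(\pi_i,\pi_{j+1})$ change status, and every internal pair that was a non-breakpoint in $\pi$ (an ascent) becomes a descent in $\pi'$ and hence a breakpoint. If the reversed segment has length $\geq 2$ it contains at least one internal pair, and one checks that it cannot simultaneously turn both $(\pi_{i-1},\pi_j)$ and $(\pi_i,\pi_{j+1})$ into non-breakpoints while all internal ascents are lost — so the net change is $\leq 0$; if it has length $1$ the reversal is the identity and $\Delta b = 0$. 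For the transreversal bullets I write out $\pi\comp\rho\tau_1(i,j,k)$ and $\pi\comp\rho\tau_2(i,j,k)$ from the definition: three adjacencies are affected, and the segment that gets reversed (the block $\pi_{i}\ldots\pi_{j-1}$ in Type~1, the block $\pi_j\ldots\pi_{k-1}$ in Type~2) again turns its ascents into descents, so at the reversed end the new pair must be a breakpoint; only the one ``clean'' junction (no reversal on either side) can become a non-breakpoint. Hence at most one breakpoint is removed, i.e. $\Delta b_{\rho\tau}(\pi,\rho\tau)\leq 1$. The revrev bullet is handled the same way using $\pi\comp\rho\rho(i,j,k)$: the two outer junctions each abut a reversed block and so stay breakpoints, and only the middle junction between the two reversed blocks can possibly be a non-breakpoint, giving $\Delta b_{\rho\rho}(\pi,\rho\rho)\leq 1$.

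The step I expect to be the main obstacle is making the ``reversed increasing run stays a breakpoint at its endpoints'' claim fully airtight, because unlike in the signed case there is no single local invariant (like the sign) that settles it in one line. I need to argue about the element that ends up at the reversed boundary: after reversal the boundary element of the block is the former \emph{largest} (or \emph{smallest}) element of that maximal increasing strip, and its old strip-neighbour on the other side is now separated, so for the new pair to be a non-breakpoint one would need a new $\pm1$ relation that the strip-maximality of $\pi$ already forbids — I would phrase this carefully in terms of which element sits at position $i-1$, $i$, etc., and note that the length-one corner case is exactly where equality $\Delta b=1$ (resp.\ $\Delta b=0$) is attained. Once that local claim is isolated, the three bullets are just three applications of it to the three rearrangement definitions, and I would abbreviate the repeated reasoning rather than redo it verbatim for each case.
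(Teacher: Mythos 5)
Your argument is built on the wrong breakpoint notion, and the preliminary claim you isolate is false for the notion this lemma actually uses. For the unsigned models $\mathcal{M}_1,\ldots,\mathcal{M}_6$ the breakpoints are those of Definition~\ref{def:rev_b}: $(\pi_i,\pi_{i+1})$ is an adjacency whenever $|\pi_{i+1}-\pi_i|=1$, regardless of direction. Under this definition a reversal leaves the status of every pair strictly inside the reversed segment unchanged (an ascent by $1$ becomes a descent by $1$, still an adjacency), so your statements that ``every internal pair that was a non-breakpoint becomes a descent and hence a breakpoint'' and that ``gluing a decreasing piece onto a neighbour forces $|\pi_{k+1}-\pi_k|\neq 1$'' do not hold; they would be correct only for transposition breakpoints (Definition~\ref{def:t_b}), which the paper reserves for the model containing transpositions alone. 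Concretely, take the extended permutation $\pi=(0~1~2~5~6~3~4~7)$, which has only increasing strips, and apply $\rho(3,5)$ to obtain $(0~1~2~3~6~5~4~7)$: both new junctions, $(2,3)$ next to the reversed block and $(5,4)$, are adjacencies, and no internal adjacency was destroyed. This is exactly the configuration your claim rules out, so none of the three bullets follows from your mechanism; moreover, your accounting for the first bullet (``cannot turn both boundary pairs into non-breakpoints'') is refuted outright by the example, and in any case it would not give $\Delta b\leq 0$, because one of the two old boundary pairs may already be a breakpoint before the reversal, so one must compare the breakpoint counts at the two junctions before and after, not merely forbid a double gain.

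The paper's proof proceeds differently and never touches internal pairs: it assumes a new boundary pair, say $(\pi'_{i-1},\pi'_i)$, is an adjacency, considers the strip $S$ of $\pi$ containing $\pi_{i-1}$ (which, by the hypothesis that all strips are increasing, has at least two elements or contains $\pi_0$), and derives a contradiction between the merged strip being increasing and the reversed material being decreasing; the transreversal and revrev bullets are then obtained by identifying which of the three new junctions abut a reversed block, so that only one junction can become an adjacency. The difficulty you flag at the end of your sketch --- the length-one piece of a strip cut by the reversal boundary --- is not a benign corner case but precisely where the argument is delicate: in the example above the strip $(3~4)$ is cut at the right boundary, the leftover singleton $3$ lands next to $2$, and a breakpoint is in fact removed, so no argument of the form ``a junction at a reversed end is always a breakpoint'' can be patched to give the stated bounds, and any correct treatment has to reason about strips and the values at the cut as the paper does (note also that the application in Theorem~\ref{theorem:wsr_unsigned} only needs that a reversal removes fewer than two breakpoints and a transreversal or revrev fewer than three). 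As it stands, your proposal does not establish the lemma.
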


\begin{proof}
  Consider a reversal $\rho$ and let $\pi' = \pi \comp \rho = (\pi_1~\ldots~\pi_{i-1}~\underline{{\pi_j}~\ldots~{\pi_i}}~\pi_{j+1}~\ldots$ $\pi_n)$, with $1 \leq i \leq j \leq n$. Suppose for the sake of contradiction that $\Delta b_{{\rho}}(\pi, {\rho}) > 0$, which indicates that either (i) $(\pi'_{i-1}, \pi'_{i})$ is not a breakpoint or (ii) $(\pi'_{j}, \pi'_{j+1})$ is not a breakpoint. Note that the strips in $\pi'_i, \ldots, \pi'_j$ are all decreasing, since $\pi$ has only increasing strips. Let $S = (\pi_{i'}, \ldots, \pi_{i-1})$ be the strip containing the element $\pi_{i-1}$ in $\pi$.
  If $(\pi'_{i-1}, \pi'_{i})$ is not a breakpoint, then the strip $S$ becomes equal to $S' = (\pi_{i'}, \ldots, \pi_{i-1}, \pi_{j}, \ldots, \pi_{j'})$ in $\pi'$. Since $\pi$ has only increasing strips, we have that $S$ has at least two elements or $\pi_{i-1} = \pi_0$. So, the strip $S'$ must be an increasing strip, which contradicts the fact that the strips in $\pi'_i, \ldots, \pi'_j$ are all decreasing. We reach a similar contradiction if $(\pi'_{j}, \pi'_{j+1})$ is not a breakpoint. Therefore, $\Delta b_{{\rho}}(\pi, {\rho}) \leq 0$.

  Consider a transreversal Type 1 ${\rho\tau}_1$ and let $\pi' =$ $\pi \comp {\rho\tau}_1(i,j,k) =$ $(\pi_1~\ldots~\pi_{i-1}$ $\underline{\pi_j~\ldots~\pi_{k-1}}~\underline{\pi_{j-1}~\ldots~\pi_{i}}$ $\pi_{k}~\ldots~\pi_n)$. Using a similar argument, the pairs $(\pi_{k-1},$ $\pi_{j-1})$ and $(\pi_i, \pi_k)$ are breakpoints and only the pair $(\pi_{i-1}, \pi_j)$ may not be a breakpoint. Therefore, $\Delta b_{{\rho}\tau}(\pi, {{\rho}\tau}_1) \leq 1$. We use an analogous argument for a transreversal Type 2.

  Consider a revrev  ${\rho\rho}$ and let $\pi' = \pi \comp {\rho\rho} = (\pi_1~\ldots~\pi_{i-1}$ $\underline{\pi_{j-1}~\ldots~\pi_{i}}~\underline{\pi_{k-1}~\ldots~\pi_{j}}$ $\pi_{k}~\ldots~\pi_n)$, with $1 \leq i < j < k \leq n+1$.
  The pairs $(\pi_{i-1}, \pi_{j-1})$ and $(\pi_j, \pi_k)$ must be breakpoints and only the pair $(\pi_{i}, \pi_{k-1})$ may not be a breakpoint. Therefore, $\Delta b_{{\rho\rho}}(\pi, {\rho\rho}) \leq 1$.
\qed\end{proof}

\begin{theorem}\label{theorem:wsr_unsigned}
Considering $\mathcal{M} = \{{\mathcal{M}_1}, {\mathcal{M}_2}, {\mathcal{M}_3}, {\mathcal{M}_4}, {\mathcal{M}_5},$ ${\mathcal{M}_6}\}$ and $w_2/w_1 \leq 1.5$, \textit{WSR} is NP-hard.
\end{theorem}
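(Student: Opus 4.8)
The plan is to reuse the reduction from \textit{SB3T} of Theorem~\ref{theorem:wsr_signed}, but with a preliminary ``doubling'' of the instance that is needed only in the unsigned setting. Two things go wrong if one copies the signed argument verbatim. First, the models $\mathcal{M}_1,\ldots,\mathcal{M}_6$ measure distance with unsigned reversal breakpoints (Definition~\ref{def:rev_b}), not with the transposition breakpoints (Definition~\ref{def:t_b}) used by \textit{SB3T}, and in general one only has $b_{\mathcal{M}}(\pi)\le b_\tau(\pi)$. Second, ``having only increasing strips'' (the hypothesis of Lemma~\ref{lemma:unsigned_var_b}) is \emph{not} preserved by transpositions: a breakpoint-maximal transposition applied to a permutation with only increasing strips may create a decreasing strip, so one cannot conclude that the first non-transposition of a sorting sequence acts on a permutation with only increasing strips. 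Both problems are removed at once. Given an \textit{SB3T} instance $\pi=(\pi_1~\ldots~\pi_n)$, let $\pi'=(2\pi_1-1~2\pi_1~2\pi_2-1~2\pi_2~\ldots~2\pi_n-1~2\pi_n)$ be the unsigned permutation on $2n$ elements obtained by replacing each $\pi_i$ by the block $(2\pi_i-1~2\pi_i)$, and output the \textit{WSR} instance $(\mathcal{M},w_1,w_2,\pi',k)$ with $k=w_2\,b_\tau(\pi)/3$. As in Theorem~\ref{theorem:wsr_signed} it suffices to treat $\mathcal{M}=\mathcal{M}_6$, since the other models use a subset of its operations.

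The first step is to record the combinatorics of $\pi'$. Inside each block the consecutive difference is $+1$; between consecutive blocks, and at the two extension elements $0$ and $2n+1$, the consecutive difference is odd and equals $1$ exactly when the corresponding pair of $\pi$ is not a transposition breakpoint. Hence no consecutive difference of $\pi'$ equals $-1$, so the unsigned reversal breakpoints and the transposition breakpoints of $\pi'$ coincide, which gives $b_{\mathcal{M}}(\pi')=b_\tau(\pi)$; in particular every strip of $\pi'$ is increasing. Call a permutation on $2n$ elements \emph{doubled} if it arises in this block form from some permutation on $n$ elements. Every doubled permutation has only increasing strips (each interior element lies in a block of two consecutive values, and the only adjacencies between blocks are increasing), its breakpoints occur only between blocks, and a transposition whose three cut positions all fall between blocks permutes whole blocks and therefore sends a doubled permutation to a doubled permutation, realising exactly a transposition of the underlying permutation on $n$ elements. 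This yields a bijection between block-aligned transposition sequences sorting $\pi'$ and transposition sequences sorting $\pi$, preserving length.

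With these facts in hand the equivalence is proved as in Theorem~\ref{theorem:wsr_signed}. If $\pi$ is sorted by $b_\tau(\pi)/3$ transpositions, lifting them block by block sorts $\pi'$ at cost $w_2\,b_\tau(\pi)/3=k$. Conversely, if $S$ sorts $\pi'$ with $w(S)\le k$, then since $w_2/w_1\le 1.5$ we have $\min\{w_1/2,w_2/3\}=w_2/3$, so Lemma~\ref{lemma:lb_unsigned} forces $w(S)=(w_2/3)\,b_{\mathcal{M}}(\pi')=k$, and hence each operation of $S$ removes exactly $3w'/w_2$ breakpoints, where $w'$ is its cost. In particular every transposition of $S$ removes $3$ breakpoints, which forces its three cut positions to sit at breakpoints of the current permutation; by the previous paragraph these are between-block positions, so the transposition is block-aligned and keeps the permutation doubled. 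Therefore, if $\beta$ were the first non-transposition of $S$, it would act on a still-doubled permutation, i.e.\ one with only increasing strips, and Lemma~\ref{lemma:unsigned_var_b} would bound its breakpoint reduction by $1$ if it is a transreversal or revrev (which must remove $3$) and by $0$ if it is a reversal (which must remove $3w_1/w_2\ge 2$) --- a contradiction. So $S$ consists only of block-aligned transpositions, its length is $b_\tau(\pi)/3$, and it descends to a sequence of $b_\tau(\pi)/3$ transpositions sorting $\pi$. The crux, and the only place where the unsigned proof genuinely differs from the signed one, is the claim that a breakpoint-maximal transposition cannot leave doubled form; verifying it is precisely why the instance is doubled, and I would establish it by the position/block bookkeeping sketched above.
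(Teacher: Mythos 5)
Your proposal is correct and follows essentially the same route as the paper: the same reduction from \textit{SB3T} via the doubled permutation $\pi'_{2i-1}=2\pi_i-1$, $\pi'_{2i}=2\pi_i$ with $k=w_2 b_\tau(\pi)/3$, the same use of the lower bound of Lemma~\ref{lemma:lb_unsigned} to force every operation to remove $3w'/w_2$ breakpoints, and the same contradiction via Lemma~\ref{lemma:unsigned_var_b} applied to the first non-transposition. Your ``doubled/block-aligned'' invariant is just a more explicit rendering of the paper's terser observation that transpositions removing three breakpoints cut only at breakpoints and hence keep all strips increasing and preserve the pairs $(\pi'_{2i-1},\pi'_{2i})$, so the sequence descends to a sorting sequence for $\pi$.
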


\begin{proof}
  Consider $\mathcal{M} = \mathcal{M}_6$. The proof is similar for the other models, since our strategy is to show that if an instance is satisfied, then only transpositions are used to sort the permutation and, therefore, a similar argument can be used for the models since they have a subset of the operations allowed in $\mathcal{M}_6$.

  We also present a reduction from the \textit{SB3T} problem to \textit{WSR}. Given an instance $\pi = (\pi_1~\ldots~\pi_n)$ for \textit{SB3T}, we construct the instance $(\mathcal{M}, w_1, w_2, \pi', k)$ for \textit{WSR}, where $k = w_2 b_{\tau}(\pi)/3$ and $\pi'$ is a permutation with $2n$ elements such that $\pi'_{2i-1} = 2\pi_{i} - 1$ and $\pi'_{2i} = 2\pi_{i}$, with $1 \leq i \leq n$.

  Next, we show that the instance $\pi$ is sorted by $b_\tau(\pi)/3$ transpositions if and only if $d_{\mathcal{M}}(\pi') \leq w_2 b_{\tau}(\pi)/3$.

  ($\rightarrow$) If $\pi$ is sorted by a sequence $S$ of length $b_\tau(\pi)/3$, then we construct the sorting sequence $S'$ such that, for every transposition $\tau(i,j,k)$ in $S$, we add the transposition $\tau(2i-1, 2j-1, 2k-1)$ in $S'$, since every element of $\pi$ was mapped into two consecutive elements of $\pi'$. In this way, $w(S') = w_2 b_{\tau}(\pi)/3$.

  ($\leftarrow$) If $\pi'$ is sorted by a sequence $S'$ of cost less than or equal to $w_2 b_{\tau}(\pi)/3$, then we claim that there exists the sequence $S$ such that $S$ has only transpositions, it sorts $\pi$, and it has length $b_\tau(\pi)/3$.

  Note that since $\pi'_{2i-1}$ and $\pi'_{2i}$ are consecutive elements, the pairs $(\pi'_{2i-1},\pi'_{2i})$ are not breakpoints, for any $1 \leq i \leq n$. Also, for $0 \leq i \leq n$:
  \begin{itemize}
    \item if $\pi_{i+1} - \pi_i = 1$, then $\pi'_{2i+1} - \pi'_{2i} = 2\pi_{i+1} - 1 - 2\pi_{i} = 1$;
    \item if $\pi_{i+1} - \pi_i > 1$, then $\pi'_{2i+1} - \pi'_{2i} = 2\pi_{i+1} - 1 - 2\pi_{i} = 2(\pi_{i+1} - \pi_i) - 1 > 1$;
    \item if $\pi_{i+1} - \pi_i < 1$, then $\pi'_{2i+1} - \pi'_{2i} = 2\pi_{i+1} - 1 - 2\pi_{i} = 2(\pi_{i+1} - \pi_i) - 1 < 1$.
  \end{itemize}

  In this way, $b_{\tau}(\pi) = b_{\mathcal{M}}(\pi')$. Note that the minimum cost to remove a breakpoint in $\pi'$ is equal to $\min \{ w_1/2, w_2/3 \} = w_2/3$, since $w_2/w_1 \leq 1.5$. So, the lower bound of Lemma~\ref{lemma:lb_unsigned} becomes $w_2 b_{\mathcal{M}}(\pi')/3 = w_2 b_{\tau}(\pi)/3$.
  We have that $w(S)$ is equal to the lower bound $w_2 b_{\tau}(\pi)/3$ and, consequently, every rearrangement of $S'$ has to remove exactly $w' \times 3/w_2$ breakpoints, where $w'$ is the rearrangement cost.

  Suppose that $S'$ has an operation that is not a transposition and let $\beta$ be the first non transposition in $S'$ to be applied. Note that before $\beta$ is applied, all strips of the permutation are increasing, since transpositions that remove $3$ breakpoints do not inverse increasing strips.
  By Lemma~\ref{lemma:unsigned_var_b}, $\beta$ does not remove $w' \times 3/w_2$ breakpoints, where $w'$ is the rearrangement cost, which is a contradiction. Therefore, $S'$ has only transpositions.

   Since $w(S') = w_2 b_{\tau}(\pi)/3$, the sequence $S'$ has $b_{\tau}(\pi)/3$ transpositions. Note that the transpositions of $S'$ do not break the pairs $(\pi'_{2i-1}, \pi'_{2i})$, for $1 \leq i \leq n$. Consider $S' = \tau'_1, \tau'_2, \ldots, \tau'_{b_{\tau}(\pi)/3}$. Now, we construct the sorting sequence $S = \tau_1, \tau_2, \ldots, \tau_{b_{\tau}(\pi)/3}$ for $\pi$, such that $\tau_x = \tau((i+1)/2, (j+1)/2, (k+1)/2)$ for $\tau'_x = \tau(i,j,k)$, with $1 \leq x \leq b_{\tau}(\pi)/3$.
\qed\end{proof}

\section{\novo{Fragmentation-Weighted Rearrangements}}\label{sec:frag}

To introduce the fragmentation cost function, we first formally define prefix, suffix, and complete rearrangements.

\begin{definition}
  A \emph{prefix reversal} is a reversal $\rho(1,j)$ or $\bar{\rho}(1,j)$, with $1 \leq j \leq n$. A \emph{suffix reversal} is a reversal $\rho(i,n)$ or $\bar{\rho}(i,n)$, with $1 \leq i \leq n$. Furthermore, a reversal $\rho(i,j)$ or $\bar{\rho}(i,j)$ is \emph{complete} when $i = 1$ and $j = n$.
\end{definition}

\begin{definition}
  A \emph{prefix transposition} is a transposition $\tau(1, j, k)$, with $1 < j < k \leq n + 1$. A \emph{suffix transposition} is a transposition $\tau(i, j, n+1)$, with $1 \leq i < j \leq n$. Furthermore, a transposition $\tau(i,j,k)$ is called \emph{complete} if $i = 1$ and $k = n + 1$.
\end{definition}

For any pair of consecutive positions $(i, i+1)$ of a permutation $\pi$, with $1 \leq i < n$, a rearrangement $\beta$ causes \emph{fragmentation} between $(i, i+1)$ if $\pi_i$ and $\pi_{i+1}$ are not adjacent in $\pi \comp \beta$.

\begin{definition}
Formally, the fragmentation cost function $f: \mathcal{M} \rightarrow \mathbb{R}$, where $\mathcal{M}$ is a rearrangement model, is defined as
\begin{align}
  f(\rho(i,j)) =
    \begin{cases}
      f^0, ~\text{if $i = 1$ and $j = n$} \\
      f^1, ~\text{if $i = 1$ and $j < n$} \\
      f^1, ~\text{if $i > 1$ and $j = n$} \\
      f^2, ~\text{if $i > 1$ and $j < n$,} \\
    \end{cases}
\end{align}
\begin{align}
  f(\tau(i,j,k)) =
    \begin{cases}
      f^1, ~\text{if $i = 1$ and $k = n+1$} \\
      f^2, ~\text{if $i = 1$ and $k < n+1$} \\
      f^2, ~\text{if $i > 1$ and $k = n+1$} \\
      f^3, ~\text{if $i > 1$ and $k < n+1$,} \\
    \end{cases}
\end{align}
where $f^0$ is a non-negative constant and $f^1, f^2$, and $f^3$ are positive constants. The superscript number in these constants indicates the number of fragmentations caused by the operation receiving that weight. We note that $f(\rho(i,j)) = f(\bar{\rho}(i,j))$, for all $i$ and $j$.
\end{definition}

\cite{2018-alexandrino-etal} considered that $f^0 = 0$, $f^1 = 1$, $f^2 = 2$, and $f^3 = 3$. In our hardness proofs, we consider the following conditions: $f^3/f^2 \leq 1.5$ and $f^3/f^1 \leq 3$. Note that the cost function considered by \cite{2018-alexandrino-etal} satisfies these conditions. In the next sections, even when omitted, we consider these conditions to be always true.

Given a rearrangement model $\mathcal{M}$, the fragmentation sort distance of a permutation $\pi$ is denoted by $d^f_\mathcal{M}(\pi)$.

\subsection{Fragmentation Breakpoints}

Now, we present the definitions of fragmentation breakpoints, which are similar to the definitions presented in Section~\ref{sec:breakpoints}, but they do not use the extended permutation.

\begin{definition}\label{def:f_rev_b}
An unsigned reversal fragmentation breakpoint exists between a pair of consecutive elements $(\pi_i, \pi_{i+1})$ if $|\pi_{i+1} - \pi_i| \neq 1$, for $1 \leq i < n$.
\end{definition}

\begin{example}
  For the unsigned permutation $\pi = (4~3~5~1~2~6)$, we have the following breakpoints from Definition~\ref{def:f_rev_b} (represented by the symbol $\circ$):
  \begin{align*}
    \pi = (4~3  \, \circ \, 5 \, \circ \, 1~2 \, \circ \, 6).
  \end{align*}
\end{example}

\begin{definition}\label{def:f_t_b}
A transposition fragmentation breakpoint (also called signed reversal fragmentation breakpoint) exists between a pair of consecutive elements $(\pi_i, \pi_{i+1})$ if $\pi_{i+1} - \pi_i \neq 1$, for $1 \leq i < n$.
\end{definition}

\begin{example}
  For the unsigned permutation $\pi = (4~3~5~1~2~6)$, we have the following breakpoints from Definition~\ref{def:f_t_b} (represented by the symbol $\circ$):
  \begin{align*}
    \pi = (4 \, \circ \, 3  \, \circ \, 5 \, \circ \, 1~2 \, \circ \, 6).
  \end{align*}
\end{example}

\begin{example}
  For the signed permutation $\pi = ({-4}~{-3}~{+5}~{+1}~{+2}~{-6})$, we have the following breakpoints from Definition~\ref{def:f_t_b} (represented by the symbol $\circ$):
  \begin{align*}
    \pi = ({-4}~{-3}\, \circ \,{+5}\, \circ \,{+1}~{+2}\, \circ \,{-6}).
  \end{align*}
\end{example}

Given a rearrangement model $\mathcal{M}$, the number of fragmentation breakpoints in a permutation $\pi$ is denoted by $b^f_{\mathcal{M}}(\pi)$ and the change in the number of fragmentation breakpoints caused by a rearrangement $\beta$ is denoted by $\Delta b^f_{\mathcal{M}}(\pi, \beta)$. The strips are defined in a similar way.

For unsigned permutations, only the identity permutation $\iota$ has no transposition fragmentation breakpoints (Definition \ref{def:f_t_b}) and only the identity permutation $\iota$ and the reverse permutation $\eta$ have no unsigned reversal fragmentation breakpoints (Definition \ref{def:f_rev_b}). Note that a complete reversal does not remove breakpoints.

For signed permutations, only the identity permutation $\iota$ and the reverse permutation $\bar{\eta}$ have no signed reversal fragmentation breakpoints (Definition \ref{def:f_t_b}). Note that, for signed permutation, we only use signed reversal fragmentation breakpoints (Definition \ref{def:f_t_b}).

\begin{lemma}[\cite{2018-alexandrino-etal}]\label{lemma:delta_b_t}
  For any permutation $\pi$ and transposition $\tau$, the maximum number of breakpoints from definitions \ref{def:f_rev_b} or \ref{def:f_t_b} removed by $\tau$ is equal to the number of fragmentations caused by this operation.
\end{lemma}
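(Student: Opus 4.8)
The plan is to reason about each of the three positions that a transposition $\tau(i,j,k)$ cuts the permutation at, and to match each potential removed breakpoint with a fragmentation that the same position forces. Recall that $\tau(i,j,k)$ swaps the adjacent blocks $A = (\pi_i~\ldots~\pi_{j-1})$ and $B = (\pi_j~\ldots~\pi_{k-1})$, so in $\pi \comp \tau$ the new adjacencies created are exactly $(\pi_{i-1},\pi_j)$, $(\pi_{k-1},\pi_i)$, and $(\pi_{j-1},\pi_k)$ (with the convention that pairs involving $\pi_0$ or $\pi_{n+1}$ are only counted when the corresponding endpoint is interior, since these are fragmentation breakpoints without the extended permutation). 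The key elementary observation is that a breakpoint of Definition~\ref{def:f_rev_b} or~\ref{def:f_t_b} sits between positions $(t,t+1)$ in $\pi$, and $\tau$ can destroy such a breakpoint only if it cuts the permutation at that very pair of positions — i.e. only at the three cut points $t \in \{i-1, j-1, k-1\}$. Hence at most three breakpoints can be removed in total, one per cut point.

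First I would make the bookkeeping precise: for each cut point, say the one between positions $j-1$ and $j$, observe that this pair $(\pi_{j-1}, \pi_j)$ either was or was not a breakpoint in $\pi$. If it was not a breakpoint, then $\pi_{j-1}$ and $\pi_j$ are adjacent in $\pi$ but are separated in $\pi \comp \tau$ (they end up in different blocks, or rather $\pi_{j-1}$ ends the first block and $\pi_j$ is no longer its successor) — so $\tau$ causes a fragmentation at $(j-1,j)$ but removes no breakpoint there. Conversely, if $\tau$ does remove a breakpoint at some cut point, then that pair was a breakpoint in $\pi$ (so $\pi_{j-1}, \pi_j$ were not adjacent, i.e. this is not counted as a fragmentation), and whether or not the broken adjacency elsewhere contributes, the clean way to finish is: the number of fragmentations caused by $\tau$ equals the number of cut points $(t, t+1)$ at which $\pi_t$ and $\pi_{t+1}$ \emph{were} adjacent in $\pi$, while the number of breakpoints removed is at most the number of cut points at which they were \emph{not} adjacent and become adjacent afterwards. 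Since each cut point falls into exactly one of these two categories, and a removed breakpoint requires the "not adjacent before, adjacent after" situation, the count of removed breakpoints is bounded by (number of cut points) minus (number of fragmentations). Combined with the global bound of $3$ cut points this already gives removed breakpoints $\le 3 - (\text{fragmentations})$, which is slightly stronger; I would then note that the statement as phrased — removed $\le$ fragmentations — still needs the matching in the right direction, so I would instead argue directly: if $\tau$ removes $r$ breakpoints, these occur at $r$ distinct cut points where the pair was non-adjacent before; the fragmentations occur at the $\le 3 - r$ remaining cut points where the pair was adjacent, so fragmentations could in principle be fewer than $r$.

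This is where the main obstacle lies, and it is why the claimed identity (maximum removed $=$ number of fragmentations) is subtler than a pure cut-point count: one must show that when $\tau$ attains the maximum number of removed breakpoints, the non-cut-point adjacencies are irrelevant and the fragmentation count is forced to equal that maximum. The clean route is to fix the number of fragmentations $f \in \{1,2,3\}$ caused by $\tau$ and show no transposition with exactly $f$ fragmentations can remove more than $f$ breakpoints. If $f = 3$, the bound is the trivial one ($\tau$ touches only $3$ positions). If $f < 3$, then at least one cut point $(t,t+1)$ has $\pi_t,\pi_{t+1}$ adjacent in $\pi$; I would show that forcing such an adjacency at a cut point (say the boundary between $A$ and $B$, or a prefix/suffix cut) structurally prevents a breakpoint from being removed at that cut point — because the new adjacency created there is one of $(\pi_{i-1},\pi_j),(\pi_{k-1},\pi_i),(\pi_{j-1},\pi_k)$, and if the old pair at that position was \emph{not} a breakpoint, there was nothing to remove there. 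So each "missing" fragmentation corresponds exactly to a cut point where no breakpoint is removed, giving removed $\le f$. I expect the bulk of the write-up to be this case analysis on which of the three cut points coincide with existing adjacencies (including the boundary cases $i = 1$ and $k = n+1$ that turn $\tau$ into a prefix/suffix transposition and reduce the number of fragmentations accordingly), and the main care is making sure the correspondence between "cut point with a pre-existing adjacency" and "cut point at which no breakpoint is removed" is airtight for both breakpoint definitions simultaneously, since $\pi$ is unsigned here and Definition~\ref{def:f_t_b} is the relevant one for the transposition-only model while Definition~\ref{def:f_rev_b} is relevant when $\tau$ appears alongside reversals.
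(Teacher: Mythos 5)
First, note that the paper does not prove this lemma at all: it is imported verbatim from \cite{2018-alexandrino-etal}, so there is no in-paper proof to compare against, and your proposal has to stand on its own. Its core skeleton is fine (a breakpoint of Definition~\ref{def:f_rev_b} or~\ref{def:f_t_b} can only disappear at one of the three positions where $\tau(i,j,k)$ cuts, and at most one per cut), but the step that matches removed breakpoints to fragmentations rests on a misreading of what a fragmentation is. In the paper's definition, fragmentation is purely positional: $\tau(i,j,k)$ causes a fragmentation at every \emph{interior} cut position $t \in \{i-1, j-1, k-1\} \cap \{1,\ldots,n-1\}$, regardless of whether the pair $(\pi_t,\pi_{t+1})$ was a breakpoint or a ``good'' adjacency of consecutive values. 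That is why the cost $f(\tau(i,j,k))$ depends only on whether $i=1$ and $k=n+1$, and equals $f^3$, $f^2$, or $f^1$ accordingly. Your proposal instead treats a fragmentation as the destruction of a pre-existing value-adjacency (a non-breakpoint pair), which makes the count depend on $\pi$. Under that reading your intermediate claims --- ``the number of fragmentations equals the number of cut points at which $\pi_t,\pi_{t+1}$ were adjacent'' and ``removed breakpoints $\leq 3 - \text{fragmentations}$'' --- are simply false for the paper's cost function (an interior transposition whose three cut pairs are all breakpoints causes $3$ fragmentations, not $0$), and the case analysis you sketch for $f<3$ (``some cut pair was adjacent'') does not correspond to the actual reason $f$ drops below $3$, which is that $i=1$ or $k=n+1$.

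With the correct definition the lemma is essentially immediate, and much simpler than your plan: breakpoints at non-cut positions persist (the pair stays contiguous), so $\tau$ can remove breakpoints only at its cut positions; cut positions at $t=0$ or $t=n$ carry no fragmentation breakpoint at all, since definitions~\ref{def:f_rev_b} and~\ref{def:f_t_b} do not use the extended permutation; and the remaining (interior) cut positions are exactly the positions at which $\tau$ causes a fragmentation. Hence $\Delta b^f(\pi,\tau)$ is at most the number of fragmentations, with equality attainable (take $\pi$ whose interior cut pairs are breakpoints and whose newly created adjacencies are consecutive values), which is the stated ``maximum equals'' form. If you rewrite the proof, drop the $3-f$ detour and the adjacency-based characterization of $f$, and instead anchor the argument on the positional nature of the fragmentation count and on the absence of boundary breakpoints; that also makes the argument uniform for both breakpoint definitions and for signed permutations, since a transposition never changes signs.
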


\begin{lemma}[\cite{2018-alexandrino-etal}]\label{lemma:delta_b_r}
  For any unsigned permutation $\pi$ and reversal $\rho$, the maximum number of breakpoints from definition \ref{def:f_rev_b} removed by $\rho$ is equal to the number of fragmentations caused by this operation.
\end{lemma}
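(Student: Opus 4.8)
The plan is to observe that a reversal can change the unsigned reversal fragmentation breakpoint status (Definition~\ref{def:f_rev_b}) of a pair of consecutive positions only at the two endpoints of the segment it reverses, that these endpoints are exactly the positions where the reversal causes fragmentation, and that counting them yields the bound; a short explicit permutation then shows the bound is tight.

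First I would fix a reversal $\rho(i,j)$ with $1 \le i \le j \le n$ and recall that $\pi \comp \rho$ agrees with $\pi$ on every position outside $\{i,\ldots,j\}$ and holds $\pi_i,\ldots,\pi_j$ in reverse order inside. Consider a pair of consecutive positions that lies strictly inside the reversed block: the two values involved in $\pi \comp \rho$ there are $\pi_{k+1}$ and $\pi_k$ for some $i \le k < j$, and because $|\pi_{k+1}-\pi_k| = |\pi_k-\pi_{k+1}|$, that position is a breakpoint of Definition~\ref{def:f_rev_b} in $\pi \comp \rho$ if and only if it is one in $\pi$. Pairs of positions entirely outside $\{i,\ldots,j\}$ are untouched as well. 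Hence the only consecutive-position pairs whose breakpoint status can change, and therefore the only ones that can account for a decrease of $b^f_{\mathcal{M}}$, are $(i-1,i)$ and $(j,j+1)$, and only when they are valid breakpoint positions, i.e. when $i>1$ and when $j<n$ respectively, since Definition~\ref{def:f_rev_b} places no breakpoint before position $1$ or after position $n$.

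Next I would match these boundary positions with fragmentations. When $i>1$, the elements $\pi_{i-1}$ and $\pi_i$ are no longer adjacent in $\pi \comp \rho$, so $\rho$ causes a fragmentation between positions $i-1$ and $i$; symmetrically $\rho$ causes a fragmentation between $j$ and $j+1$ exactly when $j<n$. Splitting into cases: if $i=1$ and $j=n$ (a complete reversal) there are no boundary positions and $\rho$ causes zero fragmentations; if exactly one of $i=1$, $j=n$ holds (a prefix or suffix reversal) there is one boundary position and one fragmentation; if $i>1$ and $j<n$ there are two boundary positions and two fragmentations; in each case the superscript of the corresponding weight $f^0$, $f^1$, or $f^2$ records exactly this count. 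Since each boundary position can remove at most one breakpoint, $\Delta b^f_{\mathcal{M}}(\pi,\rho)$ is in every case at most the number of fragmentations caused by $\rho$.

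Finally I would check that the bound is attained, so that the stated maximum equals the number of fragmentations exactly: for the internal case it suffices to take a permutation $\pi$ in which $(i-1,i)$ and $(j,j+1)$ are breakpoints while $|\pi_{i-1}-\pi_j| = 1$ and $|\pi_i-\pi_{j+1}| = 1$, so that after $\rho$ both boundary pairs become non-breakpoints; a short explicit permutation realizes this, and the prefix and suffix cases are identical with a single boundary. The only point that needs care is the endpoint convention of Definition~\ref{def:f_rev_b}, which must be invoked to line the complete/prefix/suffix/internal cases up with $f^0/f^1/f^1/f^2$, together with the fact that the interior argument genuinely relies on the absolute value in Definition~\ref{def:f_rev_b}; this is precisely why no analogous statement for reversals holds with respect to the transposition fragmentation breakpoints of Definition~\ref{def:f_t_b}, and hence why the lemma is stated only for Definition~\ref{def:f_rev_b}.
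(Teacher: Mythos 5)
Your argument is correct. Note, however, that the paper does not prove this lemma at all: it is imported from \cite{2018-alexandrino-etal}, so there is no in-paper proof to compare against; your boundary-pair analysis (internal pairs of the reversed block keep their $|\pi_{k+1}-\pi_k|$ and hence their Definition~\ref{def:f_rev_b} status, only the pairs $(i-1,i)$ and $(j,j+1)$ can change, and these exist exactly when the reversal fragments there, giving the $f^0/f^1/f^1/f^2$ case match) is the standard argument and is exactly what the surrounding results (e.g.\ Lemma~\ref{lemma:lb_fragmentation_unsigned}) need, which in fact only use the upper-bound direction. Your tightness witness is left implicit but is trivially realizable, e.g.\ $\pi = (1~3~2~4)$ with $\rho(2,3)$ removes two breakpoints with two fragmentations, and $(1~\ldots~i{-}1~j~j{-}1~\ldots~i~j{+}1~\ldots~n)$ works for a general interior $\rho(i,j)$.
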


\begin{lemma}[\cite{2018-alexandrino-etal}]\label{lemma:delta_b_r}
  For any signed permutation $\pi$ and reversal $\bar{\rho}$, the maximum number of breakpoints from definition \ref{def:f_t_b} removed by $\bar{\rho}$ is equal to the number of fragmentations caused by this operation.
\end{lemma}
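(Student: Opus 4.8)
The plan is to analyze directly how a signed reversal $\bar{\rho}(i,j)$ affects the pairs of consecutive elements of $\pi$, and to split the argument according to the four cases that determine the fragmentation cost: \emph{complete} ($i=1$, $j=n$), \emph{prefix} ($i=1$, $j<n$), \emph{suffix} ($i>1$, $j=n$), and \emph{internal} ($i>1$, $j<n$), which cause $0$, $1$, $1$, and $2$ fragmentations, respectively (these are the superscripts of $f^0, f^1, f^1, f^2$). It suffices to prove the upper bound: any reversal removes at most that many breakpoints of Definition~\ref{def:f_t_b}.

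First I would write $\pi' = \pi \comp \bar{\rho}(i,j) = (\pi_1~\ldots~\pi_{i-1}~{-\pi_j}~\ldots~{-\pi_i}~\pi_{j+1}~\ldots~\pi_n)$ and classify the consecutive pairs of $\pi'$ into three groups: (a) pairs lying entirely outside the affected range, i.e.\ $(\pi_m,\pi_{m+1})$ with $m \leq i-2$ or $m \geq j+1$, which coincide with the corresponding pairs of $\pi$; (b) the \emph{internal} pairs of the reversed block, which in $\pi'$ have the form $(-\pi_{\ell+1},\,-\pi_{\ell})$ for $i \leq \ell \leq j-1$; and (c) the at most two \emph{boundary} pairs $(\pi_{i-1},\,-\pi_j)$ and $(-\pi_i,\,\pi_{j+1})$.

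The key observation is that each internal pair preserves its breakpoint status, because $(-\pi_{\ell}) - (-\pi_{\ell+1}) = \pi_{\ell+1} - \pi_{\ell}$; hence $(-\pi_{\ell+1},\,-\pi_{\ell})$ is a breakpoint in $\pi'$ under Definition~\ref{def:f_t_b} if and only if $(\pi_{\ell},\pi_{\ell+1})$ is a breakpoint in $\pi$. Combined with group (a), this shows that the only pairs whose breakpoint status can change are the boundary pairs of group (c). Therefore $\Delta b^f_{\mathcal{M}}(\pi, \bar{\rho}(i,j))$ is at most the number of boundary pairs that actually sit at a legal breakpoint position, i.e.\ whose index lies in $\{1,\ldots,n-1\}$ — recall that, unlike in Section~\ref{sec:breakpoints}, fragmentation breakpoints are not defined on the extended permutation, so positions $0$ and $n+1$ are not breakpoint sites.

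Finally I would count these: the pair $(\pi_{i-1},\,-\pi_j)$ sits at index $i-1$, which is legal iff $i>1$, and the pair $(-\pi_i,\,\pi_{j+1})$ sits at index $j$, which is legal iff $j<n$. So the number of changeable pairs is $0$ when $i=1$ and $j=n$, is $1$ when exactly one of $i=1$, $j=n$ holds, and is $2$ when $i>1$ and $j<n$ — precisely the number of fragmentations of $\bar{\rho}(i,j)$ in each of the four cases. This yields the claimed bound, and one can observe it is tight by exhibiting a reversal of each type that attains it. I expect no real obstacle; the only care needed is the bookkeeping at the extremes (the degenerate subcase $i=j$, where there are no internal pairs, and the fact that the first and last slots are not breakpoint sites), which the case split already handles.
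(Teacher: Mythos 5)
Your argument is correct, and it is worth noting that the paper itself gives no proof of this lemma at all: it is imported verbatim from \cite{2018-alexandrino-etal}, so there is no in-paper proof to compare against. Your decomposition into outside, internal, and boundary pairs is the natural one, and the crucial observation -- that for signed (Definition~\ref{def:f_t_b}) breakpoints an internal pair $({-\pi_{\ell+1}},{-\pi_\ell})$ has the same difference $\pi_{\ell+1}-\pi_\ell$ and hence the same breakpoint status as before -- is exactly what makes the count reduce to the boundary positions $i-1$ and $j$, whose legality ($i>1$, resp.\ $j<n$) matches the superscripts $0,1,1,2$ of the cost classes. Two small points to tidy up. First, the lemma asserts that the \emph{maximum} removal equals the fragmentation count, so the tightness direction needs actual witnesses rather than the remark that one ``can observe'' it; these are one-liners, e.g.\ $({+1}~{-3}~{-2}~{+4})$ with $\bar{\rho}(2,3)$ removes two breakpoints, and $({-2}~{-1}~{+3})$ with $\bar{\rho}(1,2)$ removes one. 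Second, your identification of the fragmentation count with the superscript of the cost class is the right reading of the paper's convention (the degenerate case $i=j$, where strictly speaking no pair of elements is separated, is still charged as its cost class prescribes), and your case split indeed covers it, but it deserves the explicit sentence you gave it rather than being left implicit.
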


\begin{lemma}\label{lemma:lb_fragmentation_unsigned}
  For any unsigned permutation $\pi$, we have
  \begin{align*}
    d^f_{\tau}(\pi) &\geq \frac{f^3 b^f_{\tau}(\pi)}{3},\text{~and}\\
    d^f_{\{\rho,\tau\}}(\pi) &\geq \frac{f^3 b^f_{\{\rho,\tau\}}(\pi)}{3}.
  \end{align*}
\end{lemma}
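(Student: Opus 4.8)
The plan is to establish each of the two lower bounds by the same breakpoint-counting argument already used in Lemma~\ref{lemma:lb_signed} and Lemma~\ref{lemma:lb_unsigned}, but now working with fragmentation breakpoints and the fragmentation cost function rather than with ordinary breakpoints and the weights $w_1, w_2$. The key observation is that for the fragmentation cost function the ``cost per fragmentation'' is what matters: an operation $\beta$ that causes $m$ fragmentations has cost $f^m$, and by Lemma~\ref{lemma:delta_b_t} (for transpositions) and Lemma~\ref{lemma:delta_b_r} (for reversals) it removes at most $m$ fragmentation breakpoints. Hence the cost incurred per breakpoint removed by $\beta$ is at least $f^m/m$, and I want to argue that $\min\{f^1/1, f^2/2, f^3/3\} = f^3/3$ under the standing assumptions $f^3/f^2 \le 1.5$ and $f^3/f^1 \le 3$.

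First I would verify that inequality on the constants: $f^3/f^2 \le 3/2$ is equivalent to $f^3/3 \le f^2/2$, and $f^3/f^1 \le 3$ is equivalent to $f^3/3 \le f^1$, so indeed $f^3/3 = \min\{f^1, f^2/2, f^3/3\}$; a complete reversal has cost $f^0$ but, as noted in the text just above the lemma, removes no fragmentation breakpoints, so it never helps and can be ignored in the bound. Second, I would note that for the model $\tau$ every operation is a transposition causing $1$, $2$, or $3$ fragmentations, and for the model $\{\rho, \tau\}$ every operation is either such a transposition or a reversal causing $0$, $1$, or $2$ fragmentations; in all cases Lemmas~\ref{lemma:delta_b_t} and~\ref{lemma:delta_b_r} give that the number of fragmentation breakpoints removed is at most the number of fragmentations caused, i.e.\ at most the superscript of the cost constant. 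Third, given any sorting sequence $S = \beta_1, \ldots, \beta_\ell$ for $\pi$, since only $\iota$ (for the $\tau$ model) or only $\iota$ and $\eta$ have zero fragmentation breakpoints, and a sorting sequence must reach $\iota$, the total number of fragmentation breakpoints removed along $S$ is exactly $b^f_{\mathcal{M}}(\pi)$; writing $m_i$ for the number of fragmentations caused by $\beta_i$, we get $\sum_i m_i \ge b^f_{\mathcal{M}}(\pi)$ while $f(S) = \sum_i f^{m_i} \ge \sum_i (f^3/3) m_i \ge (f^3/3) b^f_{\mathcal{M}}(\pi)$, which is the claimed bound for both models.

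There is essentially no serious obstacle here: the argument is a direct transcription of the proofs of Lemmas~\ref{lemma:lb_signed} and~\ref{lemma:lb_unsigned} into the fragmentation setting, using the previously stated Lemmas~\ref{lemma:delta_b_t} and~\ref{lemma:delta_b_r} in place of the elementary ``$\Delta b \le 3$'' bounds. The one point that needs a small amount of care — and the only place a reader might stumble — is the bookkeeping for the $\{\rho,\tau\}$ model, namely confirming that a complete reversal (cost $f^0$, zero fragmentation breakpoints removed) does not break the inequality $f^{m_i} \ge (f^3/3) m_i$ when $m_i = 0$: this holds trivially since both sides are then the corresponding nonnegative quantities with $f^0 \ge 0 = (f^3/3)\cdot 0$. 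Everything else is the same elementary summation, so I would write the proof for the $\tau$ case in full and then remark that the $\{\rho,\tau\}$ case is identical after also invoking Lemma~\ref{lemma:delta_b_r} and the observation that complete reversals remove no breakpoints.
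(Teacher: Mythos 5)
Your proposal is correct and follows essentially the same route as the paper's proof: both bound the cost per fragmentation breakpoint removed by $\min\{f^1/1, f^2/2, f^3/3\} = f^3/3$ using Lemmas~\ref{lemma:delta_b_t} and~\ref{lemma:delta_b_r} together with the assumptions $f^3/f^2 \leq 1.5$ and $f^3/f^1 \leq 3$, and then sum over a sorting sequence, which must remove all $b^f_{\mathcal{M}}(\pi)$ breakpoints since $\iota$ has none. Your explicit handling of the complete reversal (cost $f^0$, zero breakpoints removed) is a slightly more careful write-up of a point the paper only notes in passing, but it is not a different argument.
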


\begin{proof}
Consider the model with only transpositions. The identity permutation has no breakpoints and, therefore, a sorting sequence must remove all breakpoints from $\pi$. From Lemma~\ref{lemma:delta_b_r}, the ratio between cost and breakpoints removed is at least $min(f^3/3, f^2/2, f^1/1)$. By the restriction on the weights of the transpositions, we have that $min(f^3/3, f^2/2, f^1/1) = f^3/3$. Therefore, to remove all breakpoints, a sorting sequence has cost of at least $f^3 b^f_{\tau}(\pi) /3 $. When considering reversals, the minimum ratio between cost and breakpoints removed is also $f^3/3$ and the proof is similar.
\qed\end{proof}

\begin{lemma}\label{lemma:lb_fragmentation_signed}
  For any signed permutation $\pi$, we have
\begin{align*}
  d^f_{\{\bar{\rho},\tau\}}(\pi) \geq \frac{f^3 b^f_{\{\bar{\rho},\tau\}}(\pi)}{3}.
\end{align*}
\end{lemma}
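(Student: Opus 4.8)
The plan is to mirror the proof of Lemma~\ref{lemma:lb_fragmentation_unsigned} almost verbatim, now working with signed permutations and the signed reversal fragmentation breakpoints of Definition~\ref{def:f_t_b}. First I would observe that, for signed permutations, the model $\{\bar{\rho},\tau\}$ uses the breakpoint notion of Definition~\ref{def:f_t_b}, and that $b^f_{\{\bar{\rho},\tau\}}(\iota) = 0$; hence any sorting sequence $S$ for $\pi$ must remove all $b^f_{\{\bar{\rho},\tau\}}(\pi)$ breakpoints over the course of its operations, so $\sum_{\beta \in S} \Delta b^f_{\{\bar{\rho},\tau\}}(\pi^{(\beta)}, \beta) = b^f_{\{\bar{\rho},\tau\}}(\pi)$, where $\pi^{(\beta)}$ is the permutation immediately before $\beta$ is applied.

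Next I would bound, operation by operation, the ratio between the cost of $\beta$ and the number of breakpoints it can remove. For a transposition, Lemma~\ref{lemma:delta_b_t} says the number of breakpoints removed is at most the number of fragmentations caused, i.e.\ at most $1$, $2$, or $3$ according to whether the transposition costs $f^1$, $f^2$, or $f^3$; so the cost-to-breakpoints-removed ratio of a transposition is at least $\min\{f^1/1, f^2/2, f^3/3\}$. For a (signed) reversal, Lemma~\ref{lemma:delta_b_r} (the signed version) gives the analogous statement: a reversal removes at most as many breakpoints as the fragmentations it causes, namely at most $0$, $1$, or $2$ according to whether it costs $f^0$, $f^1$, or $f^2$; a complete reversal removes no breakpoints at all, and otherwise the ratio is at least $\min\{f^1/1, f^2/2\}$. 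Combining, every operation in $S$ has cost at least $\min\{f^1/1, f^2/2, f^3/3\}$ times the number of breakpoints it removes.

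Finally I would invoke the standing assumptions $f^3/f^2 \leq 1.5$ and $f^3/f^1 \leq 3$, which give $f^3/3 \leq f^2/2$ and $f^3/3 \leq f^1$, hence $\min\{f^1/1, f^2/2, f^3/3\} = f^3/3$. Therefore $w(S) = \sum_{\beta\in S} w(\beta) \geq (f^3/3)\sum_{\beta\in S}\Delta b^f_{\{\bar{\rho},\tau\}}(\pi^{(\beta)},\beta) = (f^3/3)\, b^f_{\{\bar{\rho},\tau\}}(\pi)$, which is exactly the claimed bound. Since this holds for an arbitrary sorting sequence, it holds for the minimum-cost one, i.e.\ $d^f_{\{\bar{\rho},\tau\}}(\pi) \geq f^3 b^f_{\{\bar{\rho},\tau\}}(\pi)/3$.

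I do not anticipate a genuine obstacle here: the argument is the signed analogue of the already-proved Lemma~\ref{lemma:lb_fragmentation_unsigned}, and all the per-operation bounds are supplied by Lemmas~\ref{lemma:delta_b_t} and~\ref{lemma:delta_b_r}. The only point requiring a line of care is making sure the reversal case is handled with the \emph{signed} version of the breakpoint-removal lemma and with Definition~\ref{def:f_t_b}, and noting explicitly that a complete reversal removes zero breakpoints so it cannot beat the $f^3/3$ ratio; everything else is the same weight-comparison bookkeeping as in the unsigned case, so the proof can reasonably just say ``similar to the proof of Lemma~\ref{lemma:lb_fragmentation_unsigned}, using the signed breakpoint definition.''
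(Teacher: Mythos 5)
Your proposal is correct and is essentially the paper's argument: the paper proves this lemma simply by noting it is ``similar to the proof of Lemma~\ref{lemma:lb_fragmentation_unsigned}'', i.e.\ the same per-operation cost-to-breakpoints-removed bound of $f^3/3$ derived from Lemmas~\ref{lemma:delta_b_t} and~\ref{lemma:delta_b_r} and the weight conditions $f^3/f^2 \leq 1.5$, $f^3/f^1 \leq 3$. Your write-up just makes explicit the details (signed breakpoints of Definition~\ref{def:f_t_b}, the complete-reversal case removing zero breakpoints) that the paper leaves implicit.
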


\begin{proof}
Similar to the proof of Lemma~\ref{lemma:lb_fragmentation_unsigned}.
\qed\end{proof}

\subsection{Hardness Proofs}

Now, we prove that the problems with the following models are NP-hard considering the fragmentation cost function:

\begin{itemize}
  \item $\mathcal{M}^f_1 = \{\tau\}$: Transpositions on Unsigned Permutations;
  \item $\mathcal{M}^f_2 = \{\bar{\rho}, \tau\}$: Reversals and Transpositions on Signed Permutations;
  \item $\mathcal{M}^f_3 = \{\rho, \tau\}$: Reversals and Transpositions on Unsigned Permutations.
\end{itemize}

\begin{definition}
  \textit{FWSR} Problem: Given a rearrangement model $\mathcal{M}$, a permutation $\pi$, and a value $k$, decide if it is possible to sort $\pi$ with a sequence of rearrangements $S$, such that $f(S) \leq k$ and every rearrangement of $S$ is in $\mathcal{M}$, that is, $d^f_\mathcal{M}(\pi) \leq k$.
\end{definition}

\begin{lemma}\label{lemma:var_prefix_suffix}
  For any unsigned permutation $\pi$, such that $\pi_1 = 1$ and $\pi_n = n$, we have that $\Delta b^f_{\tau}(\pi, \tau)$ is less than the number of fragmentations caused by $\tau$, if $\tau$ is a prefix, suffix, or complete transposition.
\end{lemma}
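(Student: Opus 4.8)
The plan is to exploit the hypothesis $\pi_1 = 1$ and $\pi_n = n$, which says that the pairs $(\pi_0,\pi_1)$ and $(\pi_n,\pi_{n+1})$ of the extended permutation are \emph{never} breakpoints, i.e.\ these ``outer'' positions are already correctly placed. Recall that the fragmentation breakpoints of Definition~\ref{def:f_t_b} are counted only between positions $i$ and $i+1$ with $1 \le i < n$, so they ignore exactly these two outer adjacencies. I would combine this with Lemma~\ref{lemma:delta_b_t}, which gives $\Delta b^f_\tau(\pi,\tau) \le (\text{number of fragmentations caused by } \tau)$ for \emph{any} transposition, and argue that for a prefix, suffix, or complete transposition the inequality must be \emph{strict}.

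The key step is a case analysis on the three positions that a transposition $\tau(i,j,k)$ ``cuts'' the permutation at: the boundaries between positions $i-1$ and $i$, between $j-1$ and $j$, and between $k-1$ and $k$. A transposition causes a fragmentation at such a boundary only if that boundary lies strictly inside the range $1 \le \cdot < n$ on which $b^f_\tau$ is measured, \emph{and} the pair there was adjacent before the operation. For a \textbf{prefix} transposition we have $i = 1$, so the first cut is the boundary $(\pi_0,\pi_1) = (0,1)$, which is outside the counted range — hence it can never be a fragmentation that removes a counted breakpoint; thus at most the other two cuts ($j$ and $k$) can contribute, so $\Delta b^f_\tau(\pi,\tau) \le 2 < 3 = f(\tau)/f^1$ in the relevant normalization, and more precisely strictly fewer breakpoints are removed than fragmentations caused. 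Symmetrically, for a \textbf{suffix} transposition $k = n+1$, the cut at the boundary $(\pi_n,\pi_{n+1}) = (n,n+1)$ is outside the counted range, so again at most two of the three cuts count. For a \textbf{complete} transposition both $i = 1$ and $k = n+1$, so only the middle cut at $j$ can contribute, leaving at most one removed breakpoint against $f^1$ worth of fragmentation. In each case the number of counted breakpoints removed is strictly less than the number of fragmentations, which is precisely the claim.

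The one point requiring care — and the place I expect the argument to need the hypothesis most delicately — is making sure that ``the outer cut is not counted'' genuinely forces strict inequality rather than just $\le$. Concretely, one must observe that since $\pi_1 = 1$ (resp.\ $\pi_n = n$) the pair $(\pi_0,\pi_1)$ (resp.\ $(\pi_n,\pi_{n+1})$) is \emph{not} a breakpoint even in the extended sense, so a prefix (resp.\ suffix) transposition that breaks this adjacency destroys a non-breakpoint: it ``wastes'' one of its fragmentations without removing a breakpoint. Hence among the (at most) three fragmentations caused, at least one does not correspond to a removed breakpoint, giving $\Delta b^f_\tau(\pi,\tau) \le (\text{fragmentations}) - 1 < (\text{fragmentations})$. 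The complete-transposition case wastes two. I would finish by noting that the fragmentation count itself ($f^1$ for prefix/suffix with the convention of the paper, or the appropriate value) is positive, so strictness of the breakpoint inequality is exactly what is needed, and no appeal to the specific weight values $f^1, f^2, f^3$ beyond their positivity is required.
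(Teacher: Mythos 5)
Your argument has a genuine gap: as written it only re-derives the non-strict bound of Lemma~\ref{lemma:delta_b_t} and never establishes the strict inequality the lemma asserts. Your fragmentation accounting is also off: by the paper's definition a fragmentation is only counted between positions $1 \leq i < n$, so a prefix transposition $\tau(1,j,k)$ with $k \leq n$ causes exactly $2$ fragmentations (at $(j-1,j)$ and $(k-1,k)$), a suffix transposition exactly $2$, and a complete transposition exactly $1$; the ``outer cut'' at $(\pi_0,\pi_1)$ or $(\pi_n,\pi_{n+1})$ is not a counted fragmentation at all. Hence declaring that outer cut ``wasted'' cannot yield $\Delta b^f_{\tau}(\pi,\tau) \leq (\text{fragmentations}) - 1$: your case analysis shows only that the destroyed breakpoints lie among the counted cuts, i.e.\ $\Delta b^f_{\tau}(\pi,\tau) \leq 2$ for a prefix transposition (where you need $\leq 1$) and $\leq 1$ for a complete transposition (where you need $\leq 0$), and the comparison ``$\leq 2 < 3$'' is against the wrong number.

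The missing idea is to examine the adjacencies the transposition \emph{creates}, not only the ones it destroys. For a prefix transposition $\tau(1,j,k)$, the resulting permutation contains the junction $(\pi_{k-1},\pi_1)$ inside the counted range; since $\pi_1 = 1$ and $\pi_{k-1} \neq 1$, we have $\pi_1 - \pi_{k-1} < 1$, so this junction is necessarily a breakpoint of Definition~\ref{def:f_t_b}. Thus even if both counted cuts destroy breakpoints, at least one new breakpoint appears, giving $\Delta b^f_{\tau}(\pi,\tau) \leq 2 - 1 < 2$. Symmetrically, a suffix transposition $\tau(i,j,n+1)$ creates the junction $(\pi_n,\pi_i)$ with $\pi_n = n$ and $\pi_i < n$, and a complete transposition creates $(\pi_n,\pi_1)=(n,1)$; in each case the created junction is a breakpoint, yielding $\Delta b^f_{\tau} \leq 1 < 2$ and $\Delta b^f_{\tau} \leq 0 < 1$, respectively. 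This is precisely how the paper exploits the hypothesis $\pi_1 = 1$ and $\pi_n = n$; your proposal applies the hypothesis only to the uncounted outer pairs, which is exactly where the strictness is lost.
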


\begin{proof}
  Consider a prefix transposition $\tau(1,j, k)$, with $1 < j < k \leq n + 1$, and let $\pi' = \pi \comp \tau(1,j, k) = (\underline{\pi_j~\ldots~\pi_{k-1}}~\underline{\pi_{1}~\ldots~\pi_{j-1}}~\pi_{k}~\ldots~\pi_n)$. Note that the number of fragmentations caused by $\tau(1,j, k)$ is equal to $2$. The pair $(\pi_{k-1}, \pi_1)$ must be a breakpoint (Definition~\ref{def:f_t_b}) in $\pi'$, since $\pi_1 = 1$ and $\pi_1 - \pi_{k-1} < 0$. Since only the pair $(\pi_{j-1}, \pi_{k})$ may not be a breakpoint, we have that $\Delta b^f_{\tau}(\pi, \tau(1, j, k)) < 2$. Note that fragmentation breakpoints do not consider the extended permutation and the pair $(\pi_0, \pi_1)$ is not considered.

  The proof is similar when considering suffix or complete transpositions.
\qed\end{proof}

Considering the model $\mathcal{M}^f_1 = \{\tau\}$, we define the problem of deciding if a permutation can be sorted with a sequence of cost equal to the the lower bound from Lemma~\ref{lemma:lb_fragmentation_unsigned}.

\begin{definition}
  \textit{FWST} Problem: Given a permutation $\pi$, decide if it is possible to sort $\pi$ with a sequence of transpositions $S$, such that $f(S) = f^3 b^f_{\tau}(\pi)/{3}$, that is, $d^f_{\mathcal{M}^f_1}(\pi) = f^3 b^f_{\tau}(\pi)/{3}$.
\end{definition}

\begin{theorem}\label{theorem:fwsr_lbt}
  The \textit{FWST} problem is NP-hard.
\end{theorem}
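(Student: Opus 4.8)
The plan is to reduce from \textit{SB3T}, taking advantage of the fact that \textit{SB3T} counts breakpoints on the extended permutation (Definition~\ref{def:t_b}) whereas \textit{FWST} counts fragmentation breakpoints (Definition~\ref{def:f_t_b}), which ignore the boundary pairs $(\pi_0,\pi_1)$ and $(\pi_n,\pi_{n+1})$. Given an instance $\pi=(\pi_1~\ldots~\pi_n)$ of \textit{SB3T}, I would build the permutation $\pi'$ of length $n+2$ with $\pi'_1=1$, $\pi'_{i+1}=\pi_i+1$ for $1\le i\le n$, and $\pi'_{n+2}=n+2$, and set the target cost to $k=f^3 b^f_{\tau}(\pi')/3$. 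This construction serves two purposes at once: (i) it pins $\pi'_1=1$ and $\pi'_{n+2}=n+2$, so that Lemma~\ref{lemma:var_prefix_suffix} is applicable; and (ii) a routine case analysis on $\pi_{i+1}-\pi_i$ shows that a consecutive pair of $\pi'$ is a fragmentation breakpoint exactly when the corresponding pair of the extended $\pi$ is a transposition breakpoint, so $b^f_{\tau}(\pi')=b_{\tau}(\pi)$ and hence $k=f^3 b_{\tau}(\pi)/3$.

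For the forward direction I would translate a sorting of $\pi$ by $b_\tau(\pi)/3$ transpositions into a sorting of $\pi'$ by replacing each $\tau(i,j,k)$ with $\tau(i+1,j+1,k+1)$. Since $i\ge 1$ and $k\le n+1$, each translated transposition acts only on positions $2,\ldots,n+1$ of $\pi'$, so it is neither a prefix, suffix, nor complete transposition and therefore costs $f^3$; the translated sequence sorts $\pi'$ with cost exactly $f^3 b_\tau(\pi)/3=k$.

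For the converse, assume $\pi'$ is sorted by a sequence $S'$ with $f(S')=f^3 b^f_\tau(\pi')/3$ (this is the minimum possible by Lemma~\ref{lemma:lb_fragmentation_unsigned}). A global counting argument pins down the structure of $S'$: its transpositions jointly remove $b^f_\tau(\pi')$ breakpoints and have total cost $f^3 b^f_\tau(\pi')/3$, while by Lemma~\ref{lemma:delta_b_t} together with the restrictions $f^3/f^2\le 1.5$ and $f^3/f^1\le 3$ no transposition $\beta$ can remove more than $(3/f^3)\,f(\beta)$ breakpoints; summing these inequalities forces every $\beta\in S'$ to remove exactly $(3/f^3)\,f(\beta)$ breakpoints. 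I would then argue $S'$ has no prefix, suffix, or complete transposition. Let $\beta$ be the first such transposition in $S'$; every transposition preceding it is internal ($1<i<j<k\le n+2$) and thus affects only positions $2,\ldots,n+1$, so the permutation $\sigma$ obtained just before $\beta$ is applied still satisfies $\sigma_1=1$ and $\sigma_{n+2}=n+2$. Lemma~\ref{lemma:var_prefix_suffix} then gives that $\beta$ removes strictly fewer breakpoints than the number of fragmentations it causes, and the weight restrictions make that number strictly smaller than $(3/f^3)\,f(\beta)$, contradicting the counting identity. Hence $S'$ uses only internal transpositions, each removing exactly $3$ breakpoints, so $|S'|=b^f_\tau(\pi')/3=b_\tau(\pi)/3$; mapping $\tau(i,j,k)\mapsto\tau(i-1,j-1,k-1)$ turns $S'$ into a sorting of $\pi$ of that length, completing the reduction.

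The delicate point is the converse direction, namely ruling out prefix, suffix, and complete transpositions. In the earlier theorems the competing operations (reversals, transreversals, revrevs) are inherently poor at removing breakpoints, but here the competitors are simply cheaper transpositions that could a priori match the $3/f^3$ efficiency ratio in the equality cases $f^3/f^2=1.5$ or $f^3/f^1=3$; this is precisely why the reduction must fix both endpoints of $\pi'$ and why Lemma~\ref{lemma:var_prefix_suffix} is invoked, together with the observation that internal transpositions never disturb those endpoints, so the endpoint property survives up to the first offending operation.
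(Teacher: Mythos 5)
Your proposal is correct and follows essentially the same route as the paper: the same construction $\pi'=(1~(\pi_1+1)~\ldots~(\pi_n+1)~(n+2))$, the same index-shift correspondence of transpositions in both directions, and the same converse argument that equality with the lower bound of Lemma~\ref{lemma:lb_fragmentation_unsigned} plus Lemma~\ref{lemma:var_prefix_suffix} (applicable because the internal transpositions preceding the first offending operation keep $1$ and $n+2$ fixed) rules out prefix, suffix, and complete transpositions. Your explicit per-operation accounting via the ratios $f^3/f^2\le 1.5$ and $f^3/f^1\le 3$ just spells out more carefully the counting step the paper states in compressed form.
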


\begin{proof}
  Given an instance $\pi$ for \textit{SB3T}, we construct the instance $\pi'$ for \textit{FWST}, where $\pi' = (1~(\pi_1+1)~(\pi_2+1)~\ldots~(\pi_n+1)~{n+2})$.

  Note that, for $1 \leq i \leq n+2$, $\pi'_i = \pi_{i-1} + 1$ and, so, $\pi'_{i+1} - \pi'_{i} = (\pi_i + 1) - (\pi_{i-1} + 1) = \pi_i - \pi_{i-1}$. Therefore, the pair $(\pi_i, \pi_{i+1})$ is a breakpoint (Definition~\ref{def:t_b}) if and only if $(\pi'_{i+1}, \pi'_{i+2})$ is a breakpoint (Definition~\ref{def:f_t_b}), for $0 \leq i \leq n$, and $b_{\tau}(\pi) = b^f_{\tau}(\pi')$. Observe that the Definition~\ref{def:f_t_b} does not consider the extended permutation.

  Next, we show that the instance $\pi$ is sorted by $b_{\tau}(\pi)/3$ transpositions if and only if $d^f_{\tau}(\pi') = f^3 b^f_{\tau}(\pi')/{3}$.

  ($\rightarrow$) If $\pi$ is sorted by a sequence $S$ of length $b_\tau(\pi)/3$, then we construct a sorting sequence $S'$ for $\pi'$ by mapping each transposition $\tau(i,j,k)$ in $S$ into the transposition $\tau(i+1, j+1, k+1)$ in $S'$. Note that each transposition in $S'$ has fragmentation cost of $f^3$ and so $f(S') = f^3 b^f_{\tau}(\pi')/{3}$.

  ($\leftarrow$) If $\pi'$ is sorted by a sequence $S'$ such that $f(S') =  f^3 b^f_{\tau}(\pi')/{3}$, then we claim that all transpositions of $S'$ have cost $f^3$ and, consequently, there exists sequence $S$ of length $b_\tau(\pi)/3$ which sorts $\pi$.

  Suppose for the sake of contradiction that $S'$ has a prefix, suffix, or complete transposition. Consider $S' = \tau'_1, \tau'_2, \ldots, \tau'_m$. Since $f(S')$ is exactly the lower bound of Lemma~\ref{lemma:lb_fragmentation_unsigned}, we have that for every transposition $\tau'$ in $S'$, the number of breakpoints removed by $\tau'$ must be equal to the number of fragmentations caused by $\tau'$. Let $\tau'_i$ be the first prefix, suffix, or complete transposition of $S'$ applied to the permutation. Before $\tau'_i$ is applied, we have that the first and the last elements of the permutation are in the correct position and, according to Lemma~\ref{lemma:var_prefix_suffix}, the variation in the number of breakpoints caused by $\tau'$ is less than the number of fragmentations caused by it, which is a contradiction. Therefore, all transpositions of $S'$ have cost $f^3$ and $|S'| = b_\tau(\pi)/3$.

  Now, we construct the sorting sequence $S = \tau_1, \tau_2, \ldots, \tau_{b_\tau(\pi)/3}$ for $\pi$, such that $\tau_x = \tau(i - 1, j - 1, k - 1)$ for $\tau'_x(i, j, k)$, with $1 \leq x \leq b_{\tau}(\pi)/3$.
\qed\end{proof}

\begin{corollary}
  Considering the model $\mathcal{M}^f_1 = \{\tau\}$, the \textit{FWSR} problem is NP-hard.
\end{corollary}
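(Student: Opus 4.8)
The plan is to observe that \textit{FWST} is nothing more than \textit{FWSR} for the model $\mathcal{M}^f_1 = \{\tau\}$ with the value $k$ pinned to the lower bound of Lemma~\ref{lemma:lb_fragmentation_unsigned}, so the NP-hardness of \textit{FWSR} falls out of Theorem~\ref{theorem:fwsr_lbt} via a trivial polynomial-time many-one reduction.

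Concretely, I would proceed as follows. Given an instance $\pi$ of \textit{FWST}, first compute $b^f_{\tau}(\pi)$ in polynomial time by scanning the consecutive pairs $(\pi_i,\pi_{i+1})$ according to Definition~\ref{def:f_t_b}, and set $k = f^3 b^f_{\tau}(\pi)/3$; output the \textit{FWSR} instance $(\mathcal{M}^f_1, \pi, k)$. Then I would verify the equivalence of the two instances: by Lemma~\ref{lemma:lb_fragmentation_unsigned} every transposition-only sorting sequence for $\pi$ has cost at least $f^3 b^f_{\tau}(\pi)/3 = k$, so $d^f_{\mathcal{M}^f_1}(\pi)\leq k$ holds if and only if $d^f_{\mathcal{M}^f_1}(\pi)= k$, which is exactly the condition defining a \textbf{yes}-instance of \textit{FWST}. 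Hence $\pi$ is a yes-instance of \textit{FWST} if and only if $(\mathcal{M}^f_1,\pi,k)$ is a yes-instance of \textit{FWSR}, and since \textit{FWST} is NP-hard by Theorem~\ref{theorem:fwsr_lbt}, so is \textit{FWSR} on $\mathcal{M}^f_1$.

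There is essentially no obstacle here — the argument is a one-line observation — so the only thing worth being careful about is confirming that the target cost of \textit{FWST} coincides with the Lemma~\ref{lemma:lb_fragmentation_unsigned} bound verbatim (it does, by the definition of \textit{FWST}), which is what makes the reduction an identity on instances up to appending the value $k$. One could alternatively phrase it in the opposite direction: any algorithm solving \textit{FWSR} on $\mathcal{M}^f_1$ can be queried with $k = f^3 b^f_{\tau}(\pi)/3$ to decide \textit{FWST}, giving the same conclusion.
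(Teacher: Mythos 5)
Your proposal is correct and matches the paper's (implicit) argument: the corollary is stated without proof precisely because \textit{FWST} is \textit{FWSR} on $\mathcal{M}^f_1$ with $k$ fixed at the Lemma~\ref{lemma:lb_fragmentation_unsigned} lower bound, so Theorem~\ref{theorem:fwsr_lbt} transfers immediately. Your extra care in using the lower bound to convert the equality condition of \textit{FWST} into the inequality condition of \textit{FWSR} is exactly the right (and only) detail to check.
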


Now, we present hardness proofs for the \textit{FWSR} problem, considering the models $\mathcal{M}^f_2 = \{\bar{\rho}, \tau\}$ and $\mathcal{M}^f_3 = \{\rho, \tau\}$, using a reduction from the \textit{FWST} problem.

\begin{lemma}[\cite{2018-alexandrino-etal}]\label{lemma:one_complete_reversal}
  For any sequence $S$ with more than one complete reversal, there exists a sequence $S'$ such that $S$ and $S'$ have the same effect and $S'$ has at most one complete reversal, which is the last rearrangement of $S'$, if it exists.
\end{lemma}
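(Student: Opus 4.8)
The plan is to push every complete reversal to the right-hand end of the sequence and then cancel them in pairs. The argument rests on two elementary facts about a complete reversal $r$ (which is $\rho(1,n)$ for unsigned permutations and $\signedrho(1,n)$ for signed ones). First, $r$ is an involution: $\pi \comp r \comp r = \pi$ for every permutation $\pi$, as one checks at once (reversing, or reversing-and-negating, twice leaves $\pi$ unchanged). Second, for any rearrangement $\beta$, putting $\beta' := r \comp \beta \comp r$ gives $\beta' \comp r = r \comp \beta \comp r \comp r = r \comp \beta$; hence the length-two sequence $r, \beta$ has the same effect as $\beta', r$, so a complete reversal immediately followed by another rearrangement can be swapped past it, at the cost of conjugating that rearrangement.

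The technical core is to check that $\beta' = r \comp \beta \comp r$ is again an allowed rearrangement of the same type as $\beta$, and in particular that $\beta'$ is complete exactly when $\beta$ is. Conjugation by a complete reversal reflects the positions of the permutation through its center (position $p$ is sent to position $n+1-p$); in the signed case the two global sign-flips contributed by the two copies of $r$ cancel on the positions that $\beta$ leaves alone and combine into the correct local behaviour on the positions it affects. Carrying out the bookkeeping one gets $r \comp \tau(i,j,k) \comp r = \tau(n+2-k,\,n+2-j,\,n+2-i)$ for a transposition and $r \comp \rho(i,j) \comp r = \rho(n+1-j,\,n+1-i)$ for an unsigned reversal, and the analogous identity for $\signedrho$; each such $\beta'$ has the same type as $\beta$ and is complete iff $\beta$ is (incidentally, a prefix operation turns into a suffix operation and conversely, so the fragmentation cost is preserved). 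If the model also contains transreversals or revrevs, the same reflection sends them to operations of the same respective type.

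With this, I would take $S = \beta_1,\dots,\beta_\ell$ and, as long as some $\beta_m$ is a complete reversal while $\beta_{m+1}$ is not, replace the pair $\beta_m,\beta_{m+1}$ by $\beta_{m+1}',\beta_m$; this preserves the effect of $S$ and the number of complete reversals in it, while strictly increasing the sum of the positions occupied by complete reversals. Since that quantity is bounded above, the process terminates with a sequence $T, r_1, \dots, r_p$ in which $T$ contains no complete reversal and every $r_i$ is one. Finally, using the involution property to delete $r_1,r_2$, then $r_3,r_4$, and so on, leaves either $T$ (if $p$ is even) or $T, r_1$ (if $p$ is odd); in both cases the resulting sequence $S'$ has the same effect as $S$ and contains at most one complete reversal, which, when present, is its last rearrangement. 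The main obstacle is purely the case analysis in the second paragraph — verifying, operation type by operation type, that conjugating by a complete reversal yields an operation of the same type with the reflected indices, the one spot where the signed sign-flips call for a little care; the involution property, the termination argument, and the pair cancellations are all immediate.
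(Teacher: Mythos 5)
Your argument is correct: the involution property of the complete reversal, the conjugation identities $r \comp \beta \comp r$ yielding an operation of the same type with reflected indices (and complete iff $\beta$ is complete), the termination measure, and the pairwise cancellation together give exactly the claimed normal form. Note that the paper itself states this lemma without proof, citing Alexandrino et al.\ 2018, so there is no in-paper argument to compare against; your bubble-the-reversals-to-the-end-and-cancel strategy is the standard one and, in all likelihood, the same as in the cited source, with the added (and, for the way the lemma is used in Theorem~\ref{theorem:fwsr_rt}, genuinely useful) observation that the swaps exchange prefix and suffix operations and hence do not increase the fragmentation cost of the sequence.
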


\begin{lemma}\label{lemma:f_pos_strips}
  For any signed permutation $\pi$, such that all strips of $\pi$ are positive, we have that $\Delta b^f_{\mathcal{M}^f_2}(\pi, \bar{\rho})$ is less than the number of fragmentations caused by $\bar{\rho}$, if $\bar{\rho}$ is a prefix or suffix reversal.
\end{lemma}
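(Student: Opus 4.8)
The plan is to reduce to the case of a prefix reversal, the suffix case being completely symmetric, and to track exactly which pairs of consecutive positions can change their breakpoint status (Definition~\ref{def:f_t_b}) when the reversal is applied. Write $\bar{\rho} = \bar{\rho}(1,j)$. Since a prefix (respectively suffix) reversal that is not complete causes exactly one fragmentation, namely at the positions $(j,j+1)$ (respectively $(i-1,i)$), with cost $f^1$, it suffices to prove that $\Delta b^f_{\mathcal{M}^f_2}(\pi,\bar{\rho}) \le 0$; that is, a prefix reversal applied to a permutation all of whose strips are positive (equivalently, all of whose elements are positive) removes no signed reversal fragmentation breakpoint.

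The first step is to write $\pi' = \pi \comp \bar{\rho}(1,j) = (-\pi_j~-\pi_{j-1}~\ldots~-\pi_1~\pi_{j+1}~\ldots~\pi_n)$ and to show that the only pair of consecutive positions whose breakpoint status can differ between $\pi$ and $\pi'$ is the one at positions $(j,j+1)$. For positions $(i,i+1)$ with $i \ge j+1$ this is immediate, as those elements are untouched. For positions $(i,i+1)$ with $i < j$, the pair in $\pi'$ is $(-\pi_{j-i+1},-\pi_{j-i})$, which is a breakpoint precisely when $\pi_{j-i+1}-\pi_{j-i}\neq 1$; this is exactly the condition for the pair $(\pi_{j-i},\pi_{j-i+1})$ at positions $(j-i,j-i+1)$ of $\pi$ to be a breakpoint (reversing and negating a run of consecutive elements preserves the signed breakpoint relation, because $x-y\neq 1$ exactly when $(-y)-(-x)\neq 1$), so this operation only permutes the internal breakpoints and does not change their count. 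The point that makes a prefix reversal behave differently from an internal one is that fragmentation breakpoints do not use the extended permutation: there is no pair ``to the left of $\pi_1$'' to account for, so only the single junction at $(j,j+1)$ is genuinely new.

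The second step inspects that junction: in $\pi'$ the pair at positions $(j,j+1)$ is $(-\pi_1,\pi_{j+1})$. Since every strip of $\pi$ is positive, every element of $\pi$ is positive, so $\pi_1 \ge 1$ and $\pi_{j+1}\ge 1$, whence $\pi_{j+1}-(-\pi_1)=\pi_1+\pi_{j+1}\ge 2 \neq 1$, and therefore $(-\pi_1,\pi_{j+1})$ is a breakpoint of $\pi'$. Combining the two steps, $b^f_{\mathcal{M}^f_2}(\pi') \ge b^f_{\mathcal{M}^f_2}(\pi)$, i.e. $\Delta b^f_{\mathcal{M}^f_2}(\pi,\bar{\rho})\le 0$, which is strictly less than the one fragmentation the reversal causes. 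For a suffix reversal $\bar{\rho}(i,n)$ one argues identically: the only new junction sits at positions $(i-1,i)$, where the pair becomes $(\pi_{i-1},-\pi_n)$ with $-\pi_n-\pi_{i-1}\le -2 \neq 1$, again a breakpoint. I expect the only real care to be needed in the first step --- verifying that negate-and-reverse on the reversed prefix is a bijection on internal breakpoints and that, because no extended element is present, exactly one position changes; once that is pinned down, the conclusion follows from the one-line positivity bound on the boundary pair.
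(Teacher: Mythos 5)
Your argument is correct and is essentially the paper's own: the paper simply invokes the sign argument of Lemma~\ref{lemma:signed_var_b} to conclude that a reversal removes no breakpoints of a permutation with only positive strips, and then compares with the single fragmentation caused by a (non-complete) prefix or suffix reversal, which is exactly what you do, just spelled out explicitly via the bijection on internal pairs and the positivity of the new boundary pair. Your implicit restriction to non-complete prefix/suffix reversals matches how the lemma is actually applied in Theorem~\ref{theorem:fwsr_rt}, where complete reversals are handled separately.
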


\begin{proof}
  Using a similar argument to the one used in Lemma~\ref{lemma:signed_var_b}, we know that a reversal does not remove breakpoints of $\pi$ and the results follows.
  %
  % Now, we need to show that this result is also valid for a prefix, suffix, or complete transposition. Consider a prefix transposition $\tau(1,j, k)$, with $1 < j < k \leq n + 1$, and let $\pi' = \pi \comp \tau(1,j, k) = (\underline{\pi_j~\ldots~\pi_{k-1}}~\underline{\pi_{1}~\ldots~\pi_{j-1}}~\pi_{k}~\ldots~\pi_n)$. The pair $(\pi_{k-1}, \pi_1)$ must be a breakpoint (Definition~\ref{def:f_t_b}) in $\pi'$, since $\pi_1 = {+1}$, $\pi_{k-1} > 1$, and, consequently, $\pi_1 - \pi_{k-1} < 0$. Therefore, we have that a prefix transposition may remove at most one breakpoint, which is less than its fragmentation cost. The proof is similar for suffix and complete transpositions.
\qed\end{proof}

\begin{theorem}\label{theorem:fwsr_rt}
  Considering the model $\mathcal{M}^f_2 = \{\bar{\rho}, \tau\}$, the \textit{FWSR} problem is NP-hard.
\end{theorem}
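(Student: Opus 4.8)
The plan is to reduce from the \textit{FWST} problem, which is NP-hard by Theorem~\ref{theorem:fwsr_lbt}, using the same ``all positive'' device as in Theorem~\ref{theorem:wsr_signed}. Given an instance $\pi$ for \textit{FWST} on $m$ elements, I would construct the signed permutation $\pi' = ({+\pi_1}~{+\pi_2}~\ldots~{+\pi_m})$ and set $k = f^3 b^f_{\tau}(\pi)/3$. Because all elements of $\pi'$ are positive, $\pi'_{i+1} - \pi'_i = \pi_{i+1} - \pi_i$, hence $b^f_{\mathcal{M}^f_2}(\pi') = b^f_{\tau}(\pi)$ and $k$ coincides with the lower bound $f^3 b^f_{\mathcal{M}^f_2}(\pi')/3$ of Lemma~\ref{lemma:lb_fragmentation_signed}. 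The goal is to show that $\pi$ is sortable by transpositions at cost exactly $f^3 b^f_{\tau}(\pi)/3$ if and only if $d^f_{\mathcal{M}^f_2}(\pi') \leq k$.

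The forward direction is direct: a transposition sequence $S$ sorting $\pi$ with $f(S) = f^3 b^f_{\tau}(\pi)/3$ also sorts $\pi'$ (transpositions do not affect signs, and $\iota$ is all positive) at the same cost, so $d^f_{\mathcal{M}^f_2}(\pi') \leq k$. For the converse, suppose $S'$ sorts $\pi'$ with $f(S') \leq k$. Lemma~\ref{lemma:lb_fragmentation_signed} then forces $f(S') = k$, so (as in the proof of that lemma, using $f^3/f^1 \leq 3$ and $f^3/f^2 \leq 1.5$) every operation of $S'$ must remove exactly $3w'/f^3$ fragmentation breakpoints, where $w'$ is its cost. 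What remains is to argue that $S'$ can be turned, without changing its cost, into a sequence using only transpositions; reading that sequence on $\pi$ (same number of elements, same positions, hence same transposition costs) then yields a sorting of $\pi$ at cost $k = f^3 b^f_{\tau}(\pi)/3$, as required.

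To achieve this I would proceed in three steps. First, invoke Lemma~\ref{lemma:one_complete_reversal}, together with the observation that conjugating an operation by a complete reversal preserves its fragmentation cost (prefix and suffix reversals both cost $f^1$, internal reversals stay internal, complete transpositions stay complete, non-complete transpositions keep cost $f^2$), to assume that $S'$ has at most one complete reversal and that it is the last operation, still with total cost $k$. Second, let $\beta$ be the first reversal of $S'$ other than this possible final complete reversal; all operations preceding $\beta$ are transpositions, so the permutation just before $\beta$ has only positive strips, and by Lemma~\ref{lemma:f_pos_strips} (for prefix and suffix reversals) and the argument of Lemma~\ref{lemma:signed_var_b} (for internal reversals) $\beta$ removes at most $0$ breakpoints, while it should remove $3w'/f^3 > 0$ since $w' \in \{f^1, f^2\}$ — a contradiction, so every non-final operation is a transposition. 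Third, if $S'$ did end with a complete reversal, the transpositions before it would carry the all-positive $\pi'$ to a permutation with only positive elements, on which a complete reversal produces a permutation with only negative elements, which cannot be $\iota$; hence there is no final complete reversal either. Therefore $S'$ consists solely of transpositions, completing the reduction. The step I expect to be the main obstacle is exactly this handling of complete reversals: since $f^0$ may be $0$, a complete reversal is free and is not excluded by the breakpoint-efficiency count, so it must be eliminated separately through Lemma~\ref{lemma:one_complete_reversal} and the positive-versus-negative sign argument.
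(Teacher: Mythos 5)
Your proposal is correct and follows essentially the same route as the paper: the same reduction from \textit{FWST} via the all-positive signed permutation with $k = f^3 b^f_{\tau}(\pi)/3$, the same use of Lemma~\ref{lemma:one_complete_reversal} to push a single complete reversal to the end, the breakpoint-efficiency argument via Lemma~\ref{lemma:f_pos_strips} to exclude other reversals, and the positive-versus-negative sign argument to exclude the final complete reversal. If anything, you are slightly more careful than the paper on two points it leaves implicit: that internal (non-prefix, non-suffix) reversals are covered by the argument of Lemma~\ref{lemma:signed_var_b} even though Lemma~\ref{lemma:f_pos_strips} is stated only for prefix and suffix reversals, and that the normalization of Lemma~\ref{lemma:one_complete_reversal} does not increase the fragmentation cost.
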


\begin{proof}
  Given an instance $\pi$ for the \textit{FWST} problem, we construct the instance $(\mathcal{M}^f_2, \pi', k)$ for \textit{FWSR}, where $\pi' = (+\pi_1~\ldots~+\pi_{n})$ and $k = f^3 b^f_{\tau}(\pi) / 3$.

  Note that $b^f_{\tau}(\pi) = b^f_{\mathcal{M}^f_2}(\pi')$, since the definition of breakpoint is the same and all elements of $\pi'$ are positive.

  Next, we show that $d^f_{\tau}(\pi) = f^3 b^f_{\tau}(\pi) / 3$ if, and only if, $d^f_{\mathcal{M}^f_2}(\pi) = f^3 b^f_{\tau}(\pi) / 3$.

  $(\rightarrow)$ If $\pi$ is sorted by a sequence of transpositions $S$ of cost $f^3 b^f_{\tau}(\pi) / 3$, then $S$ also sorts $\pi'$ with the same cost.

  $(\leftarrow)$ If $\pi'$ is sorted by a sequence $S$ of cost $f^3 b^f_{\tau}(\pi) / 3$, then we claim that $S$ has only transpositions and also sorts $\pi$ with cost $f^3 b^f_{\tau}(\pi) / 3$.

  Consider, without loss of generality, that $S$ has at most one complete reversal, which is the last rearrangement of $S$, if it exists (Lemma~\ref{lemma:one_complete_reversal}).

  Since $f(S)$ is exactly the lower bound of Lemma~\ref{lemma:lb_fragmentation_unsigned}, we have that for every operation $\beta$ in $S$, the number of breakpoints removed by $\beta$ must be equal to the number of fragmentations caused by it. Suppose for the sake of contradiction that $S$ has a reversal that is not a complete reversal. Let $\bar{\rho}$ be the first reversal of $S$ to be applied. Before $\bar{\rho}$ is applied, the permutation has only positive strips and, according to Lemma~\ref{lemma:f_pos_strips}, this rearrangement does not remove breakpoints, which is a contradiction. Therefore, $S$ has only transpositions, except for the last rearrangement that may be a complete reversal. Suppose that the last rearrangement $\beta'$ in $S$ is a complete reversal. Since the other rearrangements of $S$ are transpositions, all elements of the permutation are positive before applying $\beta'$, and a complete reversal would turn every element into a negative element, which contradicts the fact that $S$ is a sorting sequence for $\pi'$. Therefore, $S$ has only transpositions and also sorts $\pi$ with cost $f^3 b^f_{\tau}(\pi) / 3$.
\qed\end{proof}

\begin{lemma}\label{lemma:f_inc_strips}
  For any unsigned permutation $\pi$, such that all strips of $\pi$ are increasing, we have that $\Delta b^f_{\mathcal{M}^f_3}(\pi, \rho)$ is less than the number of fragmentations caused by $\rho$, if $\rho$ is a prefix or suffix reversal.
\end{lemma}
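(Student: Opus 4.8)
The plan is to mirror the short argument behind Lemma~\ref{lemma:f_pos_strips}, adapting the reasoning of Lemma~\ref{lemma:unsigned_var_b} to fragmentation breakpoints: I want to show that a \emph{non-complete} prefix or suffix reversal cannot remove any breakpoint of Definition~\ref{def:f_rev_b} when all strips of $\pi$ are increasing, and then the statement is immediate, since such a reversal causes at least one fragmentation while removing zero breakpoints. The complete reversal is outside the scope of this lemma (it causes no fragmentation and does not remove breakpoints either); as in Theorem~\ref{theorem:fwsr_rt}, it is dealt with separately in the corresponding theorem via Lemma~\ref{lemma:one_complete_reversal}.

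First I would reduce the problem to a single adjacency. For a prefix reversal $\rho(1,j)$ with $2 \leq j \leq n-1$, writing $\pi' = \pi \comp \rho(1,j) = (\pi_j~\pi_{j-1}~\ldots~\pi_1~\pi_{j+1}~\ldots~\pi_n)$, reversing a prefix only reverses the order of the internal adjacencies (whose breakpoint status depends on $|\pi_{k+1}-\pi_k|$ and is therefore preserved), and $\pi_0$ is irrelevant for fragmentation breakpoints; hence the only pair of consecutive positions whose status can change is $(j,j+1)$, passing from $(\pi_j,\pi_{j+1})$ in $\pi$ to $(\pi_1,\pi_{j+1})$ in $\pi'$. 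Since exactly one fragmentation is caused, it suffices to exclude the situation in which $(\pi_j,\pi_{j+1})$ is a breakpoint of $\pi$ and $(\pi_1,\pi_{j+1})$ is not a breakpoint of $\pi'$.

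For that last step I would argue by contradiction. If $(\pi_1,\pi_{j+1})$ is not a breakpoint, then $|\pi_{j+1}-\pi_1| = 1$. Because $\pi$ has only increasing strips it has no singletons, so $\pi_1$ begins an increasing strip of length at least two and $\pi_2 = \pi_1+1$; as $\pi_{j+1}$ cannot equal $\pi_2$, this forces $\pi_{j+1} = \pi_1-1$. Now inspect the strip of $\pi$ containing $\pi_{j+1}$: since $(\pi_j,\pi_{j+1})$ is a breakpoint and $\pi_1 = \pi_{j+1}+1$ already occupies another position, $\pi_{j+1}$ is neither preceded nor followed inside an increasing strip of $\pi$, so either it is a singleton of $\pi$ (impossible) or its right neighbour satisfies $\pi_{j+2} = \pi_{j+1}-1$, producing a decreasing strip of $\pi$ (again impossible); when $j+1 = n$, the breakpoint $(\pi_{n-1},\pi_n)$ already makes $\pi_n$ a singleton. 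Hence $\Delta b^f_{\mathcal{M}^f_3}(\pi,\rho) \leq 0$, which is strictly less than the single fragmentation caused. The suffix case is symmetric: the altered pair sits at positions $(i-1,i)$, from $(\pi_{i-1},\pi_i)$ to $(\pi_{i-1},\pi_n)$, and one repeats the reasoning with ``$\pi_1$ begins a strip'' replaced by ``$\pi_{i-1}$ ends a strip''.

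The main obstacle is precisely this contradiction step: unlike Lemma~\ref{lemma:unsigned_var_b}, one must track the local strip structure around the \emph{un-reversed} endpoint $\pi_{j+1}$ (respectively $\pi_{i-1}$) rather than around the reversed endpoint, and one must keep in mind that ``all strips of $\pi$ are increasing'' means $\pi$ decomposes into increasing runs of length at least two, in particular it has no singletons — exactly the fact that closes off both escape routes for $\pi_{j+1}$. The remaining ingredients (counting fragmentations, observing that only one adjacency changes) are routine.
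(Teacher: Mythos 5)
Your proof is correct and takes essentially the same route as the paper: the paper's own proof of Lemma~\ref{lemma:f_inc_strips} is a one-line appeal to the argument of Lemma~\ref{lemma:unsigned_var_b} (a reversal removes no fragmentation breakpoints when all strips are increasing, while a non-complete prefix or suffix reversal causes one fragmentation), which is exactly your plan. Your explicit reduction to the single changed adjacency, the no-singleton observation, and the exclusion of the complete reversal merely spell out details the paper leaves implicit (the complete case is indeed dealt with separately in the corresponding theorem via Lemma~\ref{lemma:one_complete_reversal}).
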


\begin{proof}
  Using a similar argument to the one used in Lemma~\ref{lemma:unsigned_var_b}, we know that a reversal does not remove breakpoints of $\pi$ and the results follows.
\qed\end{proof}

Note that, for any unsigned permutation $\pi$ in which all strips of $\pi$ are increasing, applying a transposition $\tau$ in $\pi$, such that $\Delta b^f_{\mathcal{M}^f_3}(\pi, \tau)$ is maximum, does not turn any increasing strip into a decreasing strip. However, a complete reversal applied to $\pi$ turns all increasing strips into decreasing strips.

\begin{theorem}
  Considering the model $\mathcal{M}^f_3 = \{\rho, \tau\}$, the \textit{FWSR} problem is NP-hard.
\end{theorem}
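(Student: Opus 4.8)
The plan is to mirror the proof of Theorem~\ref{theorem:fwsr_rt} almost verbatim, replacing the signed-permutation reduction with the unsigned one. Given an instance $\pi$ for \textit{FWST}, I would construct the instance $(\mathcal{M}^f_3, \pi', k)$ for \textit{FWSR} with $\pi' = \pi$ (the very same unsigned permutation) and $k = f^3 b^f_{\tau}(\pi)/3$. Since the transposition fragmentation breakpoints of Definition~\ref{def:f_t_b} coincide with the reversal fragmentation breakpoints of Definition~\ref{def:f_rev_b} for an unsigned permutation only when $\pi_{i+1} > \pi_i$ inside every strip, I would instead keep using Definition~\ref{def:f_t_b} as the breakpoint notion throughout $\mathcal{M}^f_3$ exactly as the excerpt does for $\mathcal{M}^f_3$ in Lemma~\ref{lemma:lb_fragmentation_unsigned} and Lemma~\ref{lemma:f_inc_strips}, so that $b^f_{\tau}(\pi) = b^f_{\mathcal{M}^f_3}(\pi')$ trivially. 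Then I would prove: $\pi$ is sorted by transpositions with cost $f^3 b^f_{\tau}(\pi)/3$ if and only if $d^f_{\mathcal{M}^f_3}(\pi') = f^3 b^f_{\tau}(\pi)/3$.

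The ($\rightarrow$) direction is immediate: a sequence of transpositions sorting $\pi$ with that cost is also a legal $\mathcal{M}^f_3$-sequence with the same cost. For ($\leftarrow$), suppose $S$ sorts $\pi'$ with $f(S) = f^3 b^f_{\tau}(\pi)/3$. By Lemma~\ref{lemma:one_complete_reversal} I may assume $S$ contains at most one complete reversal, occurring last. Because $f(S)$ equals the lower bound of Lemma~\ref{lemma:lb_fragmentation_unsigned}, every operation $\beta$ in $S$ must remove a number of breakpoints equal to the number of fragmentations it causes. Now I argue no reversal (other than a trailing complete one) can appear: before the first reversal $\bar\rho$ or $\rho$ is applied, all preceding operations are transpositions that remove the maximum number of breakpoints, and by the remark following Lemma~\ref{lemma:f_inc_strips} such transpositions do not turn any increasing strip into a decreasing one, so all strips are still increasing; by Lemma~\ref{lemma:f_inc_strips} a prefix or suffix reversal then removes strictly fewer breakpoints than the number of fragmentations it causes (and Lemma~\ref{lemma:unsigned_var_b}/Lemma~\ref{lemma:delta_b_r} handles an internal reversal, which removes $0$ breakpoints on an all-increasing permutation while causing $2$ fragmentations) — contradiction. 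Hence $S$ consists solely of transpositions except possibly a final complete reversal. If that trailing complete reversal existed, then immediately before it the permutation is obtained from $\pi$ by transpositions only, so it is unsigned-ordered precisely when it equals $\iota$ (by the breakpoint count argument it must have zero breakpoints); but a complete reversal applied to $\iota$ yields $\eta \neq \iota$, contradicting that $S$ sorts $\pi'$. Therefore $S$ has only transpositions, each of cost $f^3$, so $|S| = b^f_{\tau}(\pi)/3 = b_{\tau}(\pi)/3$, and $S$ itself is the desired sorting sequence for the \textit{FWST} instance.

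The only subtlety, and the step I would be most careful about, is the handling of the possible trailing complete reversal together with the breakpoint bookkeeping: one must confirm that, just before such a reversal, the permutation has already reached $\iota$ (so the complete reversal would be strictly wasteful and in fact unsorts the permutation), which follows because the prefix of $S$ consisting of transpositions each removing its full quota of breakpoints must have removed all $b^f_{\tau}(\pi)$ of them. I would also note explicitly, as the excerpt does, that fragmentation breakpoints ignore the extended permutation, so the reverse permutation $\eta$ is the unique non-identity permutation with zero reversal fragmentation breakpoints; this is exactly why the complete-reversal case cannot be absorbed and must be excluded by the sign-free analogue (the strips/ordering argument) rather than dismissed outright. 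With these observations, the reduction is complete and polynomial, so \textit{FWSR} for $\mathcal{M}^f_3 = \{\rho,\tau\}$ is NP-hard.
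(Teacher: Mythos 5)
There is a genuine gap, and it is precisely the ingredient the paper's one-line proof points to (the doubling of Theorem~\ref{theorem:wsr_unsigned}) that you dropped. Your reduction takes $\pi' = \pi$ and declares that you will ``keep using Definition~\ref{def:f_t_b}'' as the breakpoint notion for $\mathcal{M}^f_3$; but the breakpoint notion is not a free choice. The lower bound of Lemma~\ref{lemma:lb_fragmentation_unsigned} for $\{\rho,\tau\}$ is stated for $b^f_{\{\rho,\tau\}}$, i.e.\ unsigned \emph{reversal} fragmentation breakpoints (Definition~\ref{def:f_rev_b}), and it is only valid for those: an unsigned reversal can remove more Definition~\ref{def:f_t_b} breakpoints than the number of fragmentations it causes (e.g.\ a complete reversal of a decreasing run removes many of them at cost $f^0$), so the ``every operation removes its full quota'' step has no valid lower bound behind it under your convention. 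Moreover, your invariant ``all strips are increasing before the first reversal'' already fails at time zero whenever the \textit{FWST} instance $\pi$ itself contains decreasing strips, which it may, since \textit{FWST} instances are arbitrary permutations. Concretely, take $\pi = (3~2~1)$ with the weights $f^0=0$, $f^1=1$, $f^2=2$, $f^3=3$: no transposition sequence of cost $f^3 b^f_{\tau}(\pi)/3 = 2$ exists (no first transposition attains its full quota), so $\pi$ is a no-instance of \textit{FWST}, yet your \textit{FWSR} instance $(\mathcal{M}^f_3, \pi, k=2)$ is a yes-instance because the complete reversal $\rho(1,3)$ sorts it at cost $f^0 = 0 \leq 2$; similarly $(1~4~3~2~5)$ is a no-instance mapped to a yes-instance via the single internal reversal $\rho(2,4)$ of cost $f^2 \leq k = 4$. (Even your equality-form claim fails in the other direction: $(2~1)$ is a yes-instance of \textit{FWST}, but $d^f_{\mathcal{M}^f_3}((2~1)) = f^0 \neq f^3/3$.)

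The fix, and what the paper intends by citing theorems~\ref{theorem:wsr_unsigned} and~\ref{theorem:fwsr_rt} together with Lemma~\ref{lemma:f_inc_strips}, is to first double the elements, $\pi'_{2i-1} = 2\pi_i - 1$ and $\pi'_{2i} = 2\pi_i$. Then $\pi'$ has only increasing strips of length at least two, so its Definition~\ref{def:f_rev_b} and Definition~\ref{def:f_t_b} breakpoint counts coincide and equal $b^f_{\tau}(\pi)$, the correct lower bound of Lemma~\ref{lemma:lb_fragmentation_unsigned} applies with the right value, and the invariant that all strips stay increasing before the first non-transposition is actually available (full-quota transpositions do not invert increasing strips). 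Lemma~\ref{lemma:f_inc_strips} then excludes prefix and suffix reversals, internal reversals remove no breakpoints on such permutations, and a trailing complete reversal is excluded as in Theorem~\ref{theorem:fwsr_rt}, since it would turn a permutation whose strips are all increasing into one whose strips are all decreasing and hence cannot produce $\iota$. Finally, the surviving all-transposition sequence respects the doubled pairs and maps back to a transposition sequence for $\pi$ with the same cost profile, exactly as in Theorem~\ref{theorem:wsr_unsigned}. Your forward direction and your use of Lemma~\ref{lemma:one_complete_reversal} are fine; it is the unsigned-reversal/decreasing-strip issue that makes the identity reduction unsound and forces the doubling.
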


\begin{proof}
Similar to the proof of theorems~\ref{theorem:wsr_unsigned} and~\ref{theorem:fwsr_rt}, using Lemma~\ref{lemma:f_inc_strips}.
\qed\end{proof}

\section{Conclusions}\label{sec:conclusion}

We showed that the problems of Sorting (Signed or Unsigned) Permutations by Rearrangements are NP-hard for twelve rearrangement models, which include transpositions alongside reversals, transreversals, and revrevs, considering that a reversal has cost $w_1$, the other rearrangements have cost $w_2$ and $w_2/w_1 \leq 1.5$. Moreover, we presented hardness proofs for the problems of Sorting Permutations by Fragmentation-Weighted Rearrangements, considering transpositions and the combination of reversals and transpositions, for some combinations of weights.

The complexity of the first set of problems remains open when $w_2/w_1 > 1.5$, and the complexity of the fragmentation-weighted problems remains open for the model with only reversals. Another future work direction is to study the hardness of approximation for the optimization problems of genome rearrangements, including models with only transpositions and transpositions alongside reversals.

\section*{Acknowledgments}

This work was supported by the National Council of Technological and Scientific Development, CNPq (grants
 400487/2016-0 and %     06/2017-05/2020 (Universal Ulisses)
 425340/2016-3%  04/2017-05/2021 (Universal Flávio)
),
the Coordena\c{c}\~{a}o de Aperfei\c{c}oamento de Pessoal de N\'{i}vel Superior - Brasil (CAPES) - Finance Code 001% CAPES Doutorado Alexsandro
, and the S\~ao Paulo Research Foundation, FAPESP (grants %
2013/08293-7%       06/2013-05/2024 (CEPID Munir)
, 2015/11937-9%     03/2017-02/2022 (Temático Flávio)
, 2017/12646-3%     12/2017-11/2022 (Temático Anderson/DejaVu)
, and 2019/27331-3% 02/2020-01/2022 (Posdoc André)
).

\end{document}